\newtheorem{theorem}{Theorem}
\newtheorem{corollary}[theorem]{Corollary}
\newtheorem{definition}[theorem]{Definition}
\newtheorem{lemma}[theorem]{Lemma}
\newtheorem{remark}[theorem]{Remark}
\newenvironment{proof}[1][Proof]{\textbf{#1.} }{\ \rule{0.5em}{0.5em}}
\begin{document}


\title{Commutativity of the adiabatic elimination limit of fast oscillatory components and the instantaneous feedback limit in 
quantum feedback networks} 


\author{John E. Gough}
  \email{jug@aber.ac.uk}
  \affiliation{Institute of Mathematics and Physics,\\
  Aberystwyth University,\\
  Aberystwyth, Wales, SY23 3BZ, United Kingdom.}

\author{Hendra I. Nurdin}%
 \email{Hendra.Nurdin@anu.edu.au}
\affiliation{Department of Information Engineering, \\
Australian National University, \\
Canberra, ACT 0200, Australia.}

\author{Sebastian Wildfeuer}
 \email{sew08@aber.ac.uk} 
 \affiliation{Institute of Mathematics and Physics,\\
  Aberystwyth University,\\
  Aberystwyth, Wales, SY23 3BZ, United Kingdom.}


\date{\today}

\begin{abstract}
We show that, for arbitrary quantum feedback networks consisting of several
quantum mechanical components connected by quantum fields, the limit of
adiabatic elimination of fast oscillator modes in the components and the
limit of instantaneous transmission along internal quantum field connections
commute. The underlying technique is to show that both limits involve a
Schur complement procedure. The result shows that the frequently used
approximations, for instance to eliminate strongly coupled optical cavities,
are mathematically consistent.
\end{abstract}

\pacs{42.50.Ct, 03.65.-w, 02.30.Yy, 07.07.Tw}

\maketitle 

\section{Introduction}

Adiabatic elimination is a standard modeling procedure adopted when dealing
with systems that have both slow and fast variables. Here one considers the limit in which the fast variables are
effectively relaxed to instantaneous equilibrium values, which may in turn
depend on external influences, and an effective dynamics may therefore be
deduced for the slow variables. The problem becomes more involved when the
system is driven by stochastic influences. In quantum optics, fast
oscillators driven by quantum input processes may be eliminated from the
dynamics using the limiting procedure that they are strongly coupled to the
input field processes. The first rigorous account of this limit was given by Gough and van Handel\cite{GoughvanHandel} and the resulting reduced open dynamics for the slow
degrees of freedom were obtained. Extensions of this result to general
nonlinear models with a slow-fast time scale separation were given
subsequently by Bouten, Silberfarb, and van Handel \cite{BoutenSilberfarb,BvHS}.

Adiabatic approximation is frequently used to simplify the description
of a model. In this paper we aim to investigate the correctness of applying
component-wise adiabatic elimination in quantum feedback networks with Markovian components. Here several
quantum systems may be connected by passing the output noise from one
component in as input to another. In the zero time delay limit we may model
the network as a global Markovian system \cite{GoughJamesCMP09,GoughJamesIEEE09}. 
For a certain class of quantum networks and under certain conditions, we show that the instantaneous feedback limit used to
obtain a Markovian quantum feedback network is indeed compatible with the
component-wise adiabatic elimination procedure. This is the ideal situation one would
require for modeling purposes, however, the conclusion is not immediately
obvious when treating individual cases, particularly when the architecture
of the network becomes complex. We show that for both limits the form of the
coefficients of the quantum stochastic differential equation (QSDE)
describing the limit evolution can be formulated as a Schur complement of
pre-limit coefficients. Commutativity of the Schur complementation procedure
then ensures the commutativity of the adiabatic elimination and
instantaneous feedback limits.

In section \ref{sec:QO-models} we shall review the rigorous results that exist for adiabatic
elimination of oscillator components and adapt the results to deal with
multiple oscillator elimination (the proof is deferred to the Appendix). We
show commutativity of the limits for a simple cascade of components and for
components in a non-trivial feedback loop. In section \ref{sec:AE-QFN}, we present the main
features of Schur complementation which we shall need, and show that both
limits involve Schur complementation procedures. The proof of commutativity
of the limits is then established in section \ref{sec:AE-IF-commutativity}.

\vspace{5mm}
\noindent {\bf Notation.} In this paper we will use the following notation: $i$ denotes $\sqrt{-1}$,  ${\rm ker}\,X$ (or 
${\rm ker}(X)$) denotes the kernel of an operator $X$, ${\rm im}\,X$ (or 
${\rm im}(X)$) denotes the image of an operator $X$. Also, $^*$ denotes the operator adjoint. For instance, if $X=(X_1,X_2,\ldots,X_m)$ is a row vector of operators $X_1,X_2,\ldots,X_m$ on some common Hilbert space then $X^*$ is column vector given by $X^*=(X_1^*,X_2^*,\ldots,X_m^*)^T$. Also, $\mathrm{Re}\,c $ (or $\mathrm{Re}(c)$) and $\mathrm{Im}\,c$ (or $\mathrm{Im}(c)$) denote the real and imaginary parts of a complex number $c$, respectively.

\section{Models in Quantum Optics}
\label{sec:QO-models}
\subsection{Quantum Input Components}

The standard motivation for quantum stochastic evolutions in physical models
has been via traveling quantum fields interacting in a Markovian fashion
with a given quantum mechanical system \cite{Gardiner}. The fields may be described by
idealized annihilation and creation operators $b_{j}\left( t\right) $ and $%
b_{j}\left( t\right) ^{\ast }$ respectively, for $j=1,\cdots ,n$, assumed to
satisfy singular commutation relations
\begin{equation*}
\left[ b_{j}\left( t\right) ,b_{k}\left( s\right) ^{\ast }\right] =\delta
_{jk}\,\delta \left( t-s\right) .
\end{equation*}
These are sometimes referred to as quantum input processes. From these we
may define regularized operators 
\begin{equation*}
B_{j}\left( t\right) ^{\ast }=\int_{0}^{t}b_{j}\left( s\right) ^{\ast
}ds,\quad B_{k}\left( t\right) =\int_{0}^{t}b_{k}\left( s\right) ds,\quad
\Lambda _{jk}\left( t\right) =\int_{0}^{t}b_{j}\left( s\right) ^{\ast
}b_{k}\left( s\right) ds.
\end{equation*}
The older, mathematically rigorous approach is that of Hudson and
Parthasarathy which realizes the open Markov dynamics of a system with
Hilbert space $\frak{h}$ through a dilation to a \textit{unitary} evolution
on a larger space $\frak{h}\otimes \frak{F}$. Specifically $\frak{F}$ is
Bose Fock space over $\frak{K}\otimes L^{2}[0,\infty )$ where $\frak{K}=%
\mathbb{C}^{n}$ is the colour, or multiplicity, space of the quantum inputs.
The processes $B_{j}\left( \cdot \right) ,B_{k}\left( \cdot \right) ^{\ast
},\Lambda _{jk}\left( \cdot \right) $ are then realized as concrete creation, annihilation and second quantization
operators on $\frak{F}$.

We shall now work in the category of such models: each element of the category will be an open quantum system modeling a quantum device. 
A single component with intrinsic Hilbert space $\mathfrak{h}$ is modeled as
an open quantum system with external driving space $\mathfrak{F}$  -
the Bose Fock space over a one-particle space $\mathfrak{K}\otimes L^{2}(%
\mathbb{R}_{+})$. As above, $\mathfrak{K}$ is the multiplicity space of the Bose
noise field, and we shall restrict to finite multiplicity for each component ($\mathfrak{K} \equiv \mathbb{C}^{n}$ for some multiplicity $n$). 
Taking $\left\{ e_{j}\right\} _{j=1}^{n}$ to be a
fixed orthonormal basis in $\mathfrak{K}$, we realize $B_{j}(t)$ as the
annihilation operator $B(e_{j}\otimes 1_{\left[ 0,t\right] })$ on $%
\mathfrak{F}$, with $B_{j}(t)^{\ast }$ the creator. The process $\Lambda
_{jk}(t)$ is then the differential second quantization of the one-particle operator 
$\left| e_{j}\right\rangle \left\langle e_{k}\right| \otimes \tilde{1}%
_{[0,t]}$ on $\mathfrak{K}\otimes L^{2}(\mathbb{R}_{+})$ where $\tilde{1}%
_{[0,t]}$ denotes the operation of multiplication by $1_{[0,t]}$. We remark that we have the continuous
tensor product decomposition 
\begin{equation*}
\mathfrak{F} \cong \mathfrak{F}_{t]} \otimes \mathfrak{F}_{[t}
\end{equation*}
for each $t>0$, where $\mathfrak{F}_{t]}$ is the past noise space (Fock space over $\mathfrak K \otimes L^2 [0,t]$) 
and $\mathfrak{F}_{[t}$ is the future noise space (Fock space over $\mathfrak K \otimes L^2 [t, \infty )$).
A process $X(\cdot )$ on $\mathfrak {h} \otimes \mathfrak {F}$ is then said to be adapted if for each $t >0$, 
$X(t)$ acts trivially on the future factor $\mathfrak{F}_{[t}$.

The Hudson-Parthasarathy theory of quantum stochastic processes \cite{HP,partha} gives a non-commutative generalization 
of It\={o}'s stochastic integral calculus. With differentials $dB_j(t)$, $dB_k^*(t)$, $d\Lambda_{jk}(t)$ understood as being It\={o}
increment \cite{HP,partha} (i.e., they are ``forward looking", $dX(t)=X(t+dt)-X(t)$ where $X$ can be any of $B_j,B_k^*,\Lambda_{jk}$), we obtain the
following quantum It\={o} table \cite{HP,partha} for second-order products of the quantum
It\={o} differentials  (omitting the dependence on $t$ for brevity)
\[
\begin{tabular}{l|llll}
$\times $ & $dB_{j}$ & $d\Lambda _{jk}$ & $dB_{k}^{\ast }$ & $dt$ \\ \hline
$dB_{l}$ & 0 & $\delta _{lj}dB_{k}$ & $\delta _{lk}dt$ & 0 \\ 
$d\Lambda _{lm}$ & 0 & $\delta _{mj}d\Lambda _{lk}$ & $\delta
_{mk}dB_{l}^{\ast }$ & 0 \\ 
$dB_{m}^{\ast }$ & 0 & 0 & 0 & 0 \\ 
$dt$ & 0 & 0 & 0 & 0
\end{tabular}
.\]

The most general form of a unitary adapted process $U(\cdot )$\ on $%
\mathfrak{h}\otimes \mathfrak{F}$, with time-independent coefficients, will
occur as the solution of a quantum stochastic differential equation (QSDE)
of the form (adopting a summation convention)
\begin{equation}
dU(t)=\left\{ 
K\otimes dt-L_{j}^{\ast }S_{jk}\otimes dB_{k}(t) 
+L_{j}\otimes dB_{j}(t)^{\ast }+(S_{jk}-\delta _{jk})\otimes d\Lambda
_{jk}(t)
\right\} U(t),  \label{HP_QSDE}
\end{equation}
with $U(0)=I$, and where the damping term is 
\begin{equation}
K=-\frac{1}{2}L_{j}^{\ast }L_{j}-iH.  \label{damping_K}
\end{equation}
We set 
\[
L=\left[ 
\begin{array}{c}
L_{1} \\ 
\vdots  \\ 
L_{n}
\end{array}
\right] ,\quad S=\left[ 
\begin{array}{ccc}
S_{11} & \cdots  & S_{1n} \\ 
\vdots  & \ddots  & \vdots  \\ 
S_{n1} & \cdots  & S_{nn}
\end{array}
\right] .
\]
We are required to take $S=\left[ S_{jk}\right] \in \mathcal{B}(\mathfrak{h}%
\otimes \mathfrak{K})$ to be unitary and $H$ to be self-adjoint. The operators
$L_j$ and $H$ are assumed to have a common dense domain in $\mathfrak{h}$, which holds in particular when these operators are bounded. In the case that they are unbounded,  they will be of a particular form which will be given in (\ref{scaled parameters}).

\bigskip

\begin{figure}[h]
\centering
\includegraphics[width=0.40\textwidth]{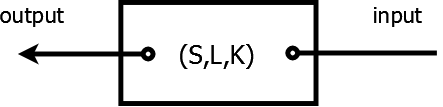}
\caption{Single component}
\label{fig:single}
\end{figure}

From our point of view the category of all possible components is parameterized by $\mathfrak h , n$ and the possible triples $(S,L,K)$ as above. 
It is convenient to collect all the coefficients in the QSDE (\ref{HP_QSDE})
into a single operator $\mathbf{G} \in \mathcal{B}( \mathfrak{H}) $ where 
\begin{equation}
\mathfrak{H}=\mathfrak{h}\otimes ( \mathbb{C}\oplus \mathfrak{K}) .
\end{equation}
With respect to the decomposition $\mathbb{C}\oplus \mathfrak{K}$ we
specifically define $\mathbf{G}$ to be 
\begin{equation}
\mathbf{G}=\left[ 
\begin{array}{cc}
K & -L^{\ast }S \\ 
L & S-I
\end{array}
\right] .
\end{equation}
In this representation, $\mathbf{G}$ appears as a  $( 1+n) $-dimensional square
matrix with entries in $\mathcal{B}( \mathfrak{h}) $.

In Fig.~\ref{fig:single} we sketch the open system as an input-output
device specified by the triple of operators $(S,L,K)$. The output fields is defined to
be the canonical processes 
\begin{equation}
B_{j}^{\mathrm{out}}(t)=U(t)^{\ast }[I\otimes B_{j}(t)]U(t).
\end{equation}

\subsection{Systems in Series}
Let us consider a pair of systems $(S_{j},L_{j},K_{j})$, $j=1,2$, connected
in series as shown in Fig.~\ref{fig:series2}. (Note that we technically do not require the observables of the two systems to commute!).

\begin{figure}[htbp]
\centering
\includegraphics[width=0.70\textwidth]{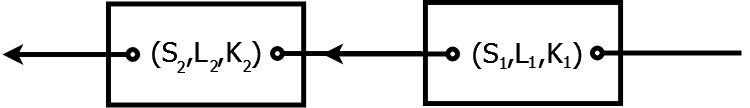}
\caption{Systems in series}
\label{fig:series2}
\end{figure}

In the instantaneous feedforward limit, the pair can be viewed as the single
system shown in Fig. \ref{fig:series} with overall parameters \cite{GoughJamesIEEE09}
\begin{equation}
(S_{\mathrm{ser}},L_{\mathrm{ser}},K_{\mathrm{ser}})=(S_{2},L_{2},K_{2})%
\vartriangleleft (S_{1},L_{1},K_{1})
\end{equation}
where the series product $\vartriangleleft$ is the associative (though non-commutative) product given by the explicit identification
\begin{eqnarray}
S_{\mathrm{ser}} &=&S_{2}S_{1}, \\
L_{\mathrm{ser}} &=&L_{2}+S_{2}L_{1}, \\
K_{\mathrm{ser}} &=&K_{1}+K_{2}-L_{2}^{\ast }S_{2}L_{1}.
\end{eqnarray}

\begin{figure}[h]
\centering
\includegraphics[width=0.40\textwidth]{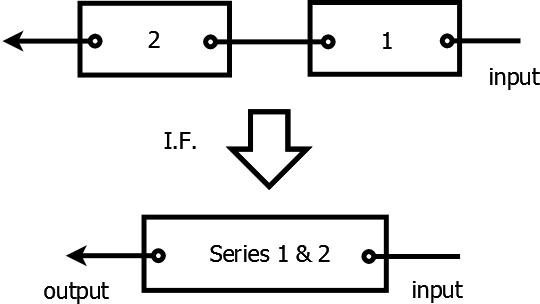}
\caption{Systems in series: the upper setup describes two systems
connected in series with a time lag $\tau>0$ in the interconnection from system 1 to 2. In the instantaneous
feedforward (I.F.) limit we consider $\protect\tau \rightarrow 0$ in which
case we obtain an effective single component model again.}
\label{fig:series}
\end{figure}

Note that if $H_{j}$ $( j=1,2) $ are the Hamiltonians of the separate
systems then the damping operators are $K_{j}=-\frac{1}{2}L_{j}^{\ast
}L_{j}-iH_{j}$ and the effective Hamiltonian in series is then given by 
\[
H_{\mathrm{ser}}=H_{1}+H_{2}+\mathrm{Im}(L_{2}^{\ast
}S_{2}L_{1}) .
\]

\subsection{Adiabatic Elimination of Oscillators}

We suppose that the system consists of local oscillators having Hilbert
space $\mathfrak{h}_{\mathrm{osc}}$\ and that the remaining degrees of
freedom live on an auxiliary space $\mathfrak{\hat{h}}$. The overall Hilbert
space of the system is then $\mathfrak{\hat{h}}\otimes \mathfrak{h}_{\mathrm{%
osc}}$ and we consider an open model described by the triple of operators 
\begin{eqnarray}
S(k) &=&S\otimes I,  \notag \\
L(k) &=&k\sum_{j}C_{j}\otimes a_{j}+G\otimes I,  \notag \\
K(k) &=&k^{2}\sum_{jl}A_{jl}\otimes a_{j}^{\ast }a_{l}+k\sum_{j}Z_{j}\otimes
a_{j}^{\ast }+k\sum_{j}X_{j}\otimes a_{j}+R\otimes I,
\label{scaled parameters}
\end{eqnarray}
where $k$ is a positive scaling parameter and $%
S,C_{j},G,A_{jl},X_{j},Z_{j},R $ are bounded operators on $\mathfrak{\hat{h}}
$\ with $A=\left[ A_{jl}\right] $ boundedly invertible. Here $a_{j}$ is the
annihilator corresponding to the $j$-th local oscillator, say with $%
j=1,\cdots ,m$.

As $k\rightarrow \infty $ the oscillators become increasingly strongly
coupled to the driving noise field and in this limit we would like to
consider them as being permanently relaxed to their joint ground state. The
oscillators are then the fast degrees of freedom of the system, with the
auxiliary space $\mathfrak{\hat{h}}$ describing the slow degrees. In the
adiabatic elimination $k\rightarrow \infty $ we desire a reduced description
of an open system involving the operators of $\mathfrak{\hat{h}}$ only, with
the fast oscillators being eliminated, as illustrated in Fig. \ref{fig:AEsingle}. The ground state for the ensemble of $%
m$ oscillators will be denoted by $\left| 0\right\rangle _{\mathrm{osc}}$.

Define $X$ to be the {\em row} vector of operators $X=(X_1,X_2,\ldots,X_m)$ and
$Z$ to be the {\em column} vector of operators $Z=(Z_1,Z_2,\ldots,Z_m)^T$. Also, we say
that a matrix  $M=[M_{jl}]_{j,l=1,\ldots,m}$, with $M_{jl}$ bounded operators on $\hat{\mathfrak{h}}$, is \emph{strictly Hurwitz stable} if ${\rm Re} \langle \phi, M \phi \rangle<0$ for all $0 \neq \phi \in \hat{\mathfrak{h}} \otimes \mathbb{C}^m$. Then we say that an open Markov quantum system with parameters of the form (\ref{scaled parameters}) is strictly Hurwitz stable if the matrix $[A_{jl}]$ is strictly Hurwitz stable. We first have the following result:

\begin{theorem}
\label{thm: adiabatic elimination} Let $U(\cdot ,k)$ be the unitary adapted
evolution associated with the triple $(S(k),L(k),K(k))$ appearing in (\ref
{scaled parameters}), and define the slow space as $\mathfrak{h}_{s}=\mathfrak{\hat{h}}%
\otimes \left\{ \mathbb{C}\left| 0\right\rangle _{\mathrm{osc}}\right\} $. If the operator $Y= \sum_{jl} A_{jl} \otimes a_j^* a_l$ has kernel space equal to the slow space, then
we have the limit 
\[
\lim_{k\rightarrow \infty }\sup_{0\leq t\leq T}\left\| U(t,k)\Phi -\hat{U}%
(t)\Phi \right\| =0,
\]
for all $T>0$ and $\Phi \in \mathfrak{h}_{s}\otimes \mathfrak{F}$, where $%
\hat{U}(\cdot )$ is the unitary evolution associated with the triple $\hat{S}%
\otimes |0\rangle \langle 0|_{\mathrm{osc}},\hat{L}\otimes |0\rangle \langle
0|_{\mathrm{osc}},\hat{K}\otimes |0\rangle \langle 0|_{\mathrm{osc}}$ and
\begin{eqnarray}
\hat{S} &=&(I+CA^{-1}C^{\ast })S,  \notag \\
\hat{L} &=&G-CA^{-1}Z,  \notag \\
\hat{K} &=&R-XA^{-1}Z.  \label{Limit parameters}
\end{eqnarray}
\end{theorem}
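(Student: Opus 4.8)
The plan is to deduce Theorem~\ref{thm: adiabatic elimination} from the singular-perturbation (adiabatic-elimination) limit theorem for quantum stochastic differential equations of Gough and van Handel\cite{GoughvanHandel} and Bouten, Silberfarb and van Handel\cite{BoutenSilberfarb,BvHS}, specialised to the present family of $m$ oscillators with couplings that are affine in the mode operators. Since the candidate limit coefficients $(\hat S,\hat L,\hat K)$ are already supplied in (\ref{Limit parameters}), the task is one of \emph{verification}: I would introduce the projection $P=I_{\hat{\mathfrak h}}\otimes|0\rangle\langle 0|_{\mathrm{osc}}$ onto the slow space $\mathfrak h_s$ and its complement $\bar P=I-P$ onto the excited (fast) sector, rewrite the scaled generator $\mathbf G(k)$ block-wise with respect to $P$, and check the structural and uniform-boundedness hypotheses of the limit theorem, reading the limit coefficients off as a Schur complement of $\mathbf G(k)$ in the regime $k\to\infty$.

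First I would record how the scaled coefficients intertwine the slow and fast sectors. Because $L(k)$ and the off-diagonal part of $K(k)$ are affine in the $a_j,a_j^\ast$, they couple the ground sector \emph{only} to the one-excitation sector, so it is natural to introduce the isometry $\iota:\hat{\mathfrak h}\otimes\mathbb C^m\to\hat{\mathfrak h}\otimes\mathfrak h_{\mathrm{osc}}$, $\iota(\psi\otimes e_j)=\psi\otimes a_j^\ast|0\rangle_{\mathrm{osc}}$, whose range is exactly that sector. A direct computation using $a_l a_p^\ast|0\rangle_{\mathrm{osc}}=\delta_{lp}|0\rangle_{\mathrm{osc}}$ gives $Y\iota=\iota A$, identifies $\sum_j C_j\otimes a_j$ composed with $\iota$ with the matrix $C$, the up-coupling $\sum_j Z_j\otimes a_j^\ast$ applied to $\mathfrak h_s$ with $\iota Z$, and $\sum_j X_j\otimes a_j$ composed with $\iota$ with the row $X$. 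Since $Y$ preserves the oscillator excitation number, the hypothesis $\ker Y=\mathfrak h_s$ is equivalent to $Y$ being boundedly invertible on $\mathrm{ran}\,\bar P$, and then $\iota^\ast Y^{-1}\iota=A^{-1}$. This identifies $A^{-1}$ as the single object through which the fast sector communicates and supplies the non-degeneracy the limit theorem requires.

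With this structure in hand I would carry out the Schur complement of $\mathbf G(k)$ over the fast sector. The effective slow coefficients arise from the second-order paths that leave the ground sector, relax through the fast resolvent $\sim(k^2Y)^{-1}$, and return either to the slow space or to the field: the up-leg ($Z$, order $k$) followed by the slow down-leg ($X$, order $k$) produces $\hat K=R-XA^{-1}Z$; the up-leg ($Z$) followed by the field emission leg ($C$, order $k$) produces $\hat L=G-CA^{-1}Z$; and the field absorption leg ($C^\ast$, carrying the factor $S$ from the $-L^\ast S$ block) followed by the emission leg ($C$) produces $\hat S=(I+CA^{-1}C^\ast)S$. In each case the three explicit factors of $k$ cancel the $k^{-2}$ of the resolvent, yielding a finite limit, and because the couplings are valued in the one-excitation sector only the block $\iota^\ast Y^{-1}\iota=A^{-1}$ enters. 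I would then check the admissibility of the limit, namely that $\hat S$ is unitary and $\hat K+\hat K^\ast=-\hat L^\ast\hat L$; these follow from the corresponding constraints satisfied by $(S(k),L(k),K(k))$ at each $k$, so that the limit is again a bona fide Hudson--Parthasarathy triple.

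The main obstacle is not the algebra but the functional analysis behind the convergence claim. One must promote the formal matching of coefficients to genuine strong convergence $\sup_{0\le t\le T}\|U(t,k)\Phi-\hat U(t)\Phi\|\to0$, which requires uniform-in-$k$ estimates on the perturbed resolvent and on the pre-limit evolution, together with control of the \emph{unbounded} operators $a_j,a_j^\ast$ on a suitable common domain. In particular one has to verify that the higher-excitation sectors, although transiently populated at finite $k$, make no contribution in the limit, so that the single-excitation reduction used to derive $(\hat S,\hat L,\hat K)$ becomes exact asymptotically. This is precisely where the full strength of the Bouten--Silberfarb--van Handel machinery (or a direct resolvent estimate exploiting $\ker Y=\mathfrak h_s$) is needed, and where I expect the bulk of the technical work to lie; I would defer this part to the Appendix.
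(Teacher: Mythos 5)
Your proposal is correct and follows essentially the same route as the paper: the paper's appendix likewise invokes the Bouten--van Handel--Silberfarb singular perturbation theorem, uses the excitation-number-preserving structure of $Y$ to show its (Moore--Penrose) inverse acts as $A^{-1}$ on the one-excitation sector, reads off $(\hat S,\hat L,\hat K)$ as the resulting Schur-complement-type coefficients $P_s(\alpha_0-\alpha_1\tilde\alpha_2\alpha_1)P_s$ etc., and separately verifies (via the identities $A+A^\ast=-C^\ast C$, $X+Z^\ast=-G^\ast C$, $R+R^\ast=-G^\ast G$) that the limit triple is a bona fide unitary Hudson--Parthasarathy generator. The only caveat is that the ``bulk of technical work'' you defer is in fact entirely absorbed by the cited theorem once its Assumptions 1--3 are checked, so the paper's proof is shorter than you anticipate.
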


\begin{remark}
\label{rm:osc-notation} For ease of notation, we will drop the factor ``$\cdot \otimes |0\rangle
\langle 0|_{\mathrm{osc}}$'' as it is obvious that in the limit we are
always relaxed into the fast oscillator ground states. Therefore we can simply think of
the limit QSDE as having initial space $\mathfrak{\hat{h}}$ and coefficients $%
 ( \hat{S},\hat{L},\hat{K} ) $.
\end{remark}

\begin{remark}
A sufficient, though not necessary, condition for the kernel space of $Y$ to equal the slow space is that the matrix $A$ be strictly Hurwitz, see Lemma \ref{lem:Hurwitz}. 
\end{remark}

The result is a generalization of what has been established for the single
mode case\cite{GoughvanHandel} where the main result is stated for weak convergence of the unitaries, but this automatically extends to the strong convergence above. There the techniques
were based on a quantum central limit theorem\cite{GoughCMP} which have been shown to extend to the
multimode situation\cite{GoughJMP}.  We shall give a proof the theorem in the Appendix, exploiting the theory  of singular perturbation of QSDEs developed by Bouten, van Handel, and Silberfarb \cite{BvHS}.

In the following, we shall drop the tensor product symbol for notational
convenience. Furthermore we shall introduce the vectorial multi-mode
notation 
\[
a=\left[ 
\begin{array}{c}
a_{1} \\ 
\vdots \\ 
a_{m}
\end{array}
\right] ,a^{\ast }=\left[ a_{1}^{\ast },\cdots ,a_{m}^{\ast }\right] .
\]
We therefore write simply 
\[
S( k) =S,\;L( k) =kCa+G,\;K( k) =k^{2}a^{\ast }Aa+ka^{\ast }Z+kXa+R.
\]
If we take the Hamiltonian to be 
\[
H( k) =k^{2}a^{\ast }\Omega a+ka^{\ast }\Gamma +k\Gamma ^{\ast }a+\Theta
\]
then 
\begin{eqnarray*}
A &=&-\frac{1}{2}C^{\ast }C-i\Omega , \\
Z &=&-\frac{1}{2}C^{\ast }G-i\Gamma , \\
X &=&-\frac{1}{2}G^{\ast }C-i\Gamma ^{\ast }, \\
R &=&-\frac{1}{2}G^{\ast }G-i\Theta .
\end{eqnarray*}
In particular we note the identities 
\begin{eqnarray}
A+A^{\ast } &=&-C^{\ast }C,  \notag \\
X+Z^{\ast } &=&-G^{\ast }C,  \notag \\
R+R^{\ast } &=&-G^{\ast }G.  \label{identities}
\end{eqnarray}

\begin{figure}[tbp]
\centering
\includegraphics[width=0.70\textwidth]{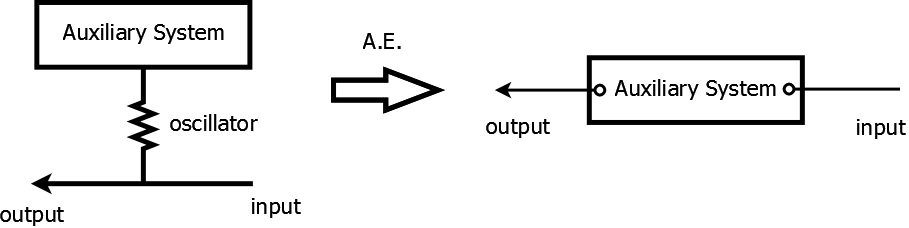}
\caption{The setup on the left shows a system of an oscillator and auxiliary
component with the oscillator coupled to a quantum field input. In the limit
where the coupling of the oscillator becomes infinitely strong, we may
adiabatically eliminate the oscillator to obtain an input acting directly on
the auxiliary component. This is sketched in the setup on the right.}
\label{fig:AEsingle}
\end{figure}

We present a na\"{i}ve derivation of the limit form appearing in Theorem \ref{thm: adiabatic elimination},
with the proof presented in the Appendix. In the interaction picture
we have the quantum Langevin equation
\begin{eqnarray*}
\dot{a} &=&\frac{1}{2}L(k)^{\ast }\left[ a,L(k)\right] +\frac{1}{2}%
L(k)^{\ast }\left[ a,L(k)\right] -i\left[ a,H(k)\right] -\left[ L(k)^{\ast
}S,a\right] b_{\mathrm{in}} \\
&=&-k^{2}(\frac{1}{2}C^{\ast }C+i\Omega )a+k(-\frac{1}{2}C^{\ast }G-i\Gamma
)-kC^{\ast }Sb_{\mathrm{in}},
\end{eqnarray*}
where $b_{\rm in}$ is an input quantum process satisfying $[b_{\rm in} (t),b_{\rm in} (s)^\ast ] = \delta (t-s)$.
Likewise the input-output relations are 
\[
b_{\mathrm{out}}=Sb_{\mathrm{in}}+L(k)=Sb_{\mathrm{in}}+(kCa+G),
\]
where $b_{\rm out}$ is the output quantum white noise field. 

We note that we may rewrite the Langevin equation as 
$\frac{1}{k}\dot{a}%
=-kAa+Z-C^{\ast }Sb_{\mathrm{in}}$ and one argues that as $k\rightarrow
\infty $ the left hand side vanishes, so that the right hand side may be
rearranged as 
\[
ka\approx A^{-1}(C^{\ast }Sb_{\mathrm{in}}-Z).
\]
The common interpretation of this is that the (scaled) oscillator mode
becomes ``slaved'' to the input field: usually this argument is given with $%
k $ fixed to unity and while clearly mathematically problematic nevertheless,
rather miraculously, yields the correct answer. Making this substitution in the output relations, we reasonably expect that 
\begin{eqnarray*}
b_{\mathrm{out}} &=&(I+CA^{-1}C^{\ast })Sb_{\mathrm{in}}+(G-CA^{-1}Z) \\
&\equiv &\hat{S}b_{\mathrm{in}}+\hat{L}.
\end{eqnarray*}
This justifies the form of $\hat{S}$ and $\hat{L}$. The form of $\hat{K}$
may be deduced by substituting $ka\approx A^{-1}(C^{\ast }Sb_{\mathrm{in}%
}-Z) $, and the conjugate relation, into the Langevin equation for any
operator acting nontrivially only on the space $\mathfrak{\hat{h}}$. (There
is a potential operator ordering ambiguity here, and the appropriate choice
is to substitute $ka$ and $ka^{\ast }$ in Wick ordered form!)

\subsection{Adiabatic Elimination and Systems in Series}
\label{sec:sys-series}
The aim of this section is to determine whether the limits of adiabatic
elimination and instantaneous feedforward do in fact commute, as illustrated
in Fig. \ref{fig:CommutingLimits}. While this is often assumed in quantum
optics models, it is certainly far from obvious. At this stage, however, we
are able to reduce the question to a direct computation.

\begin{figure}[tbp]
\centering
\includegraphics[width=0.70\textwidth]{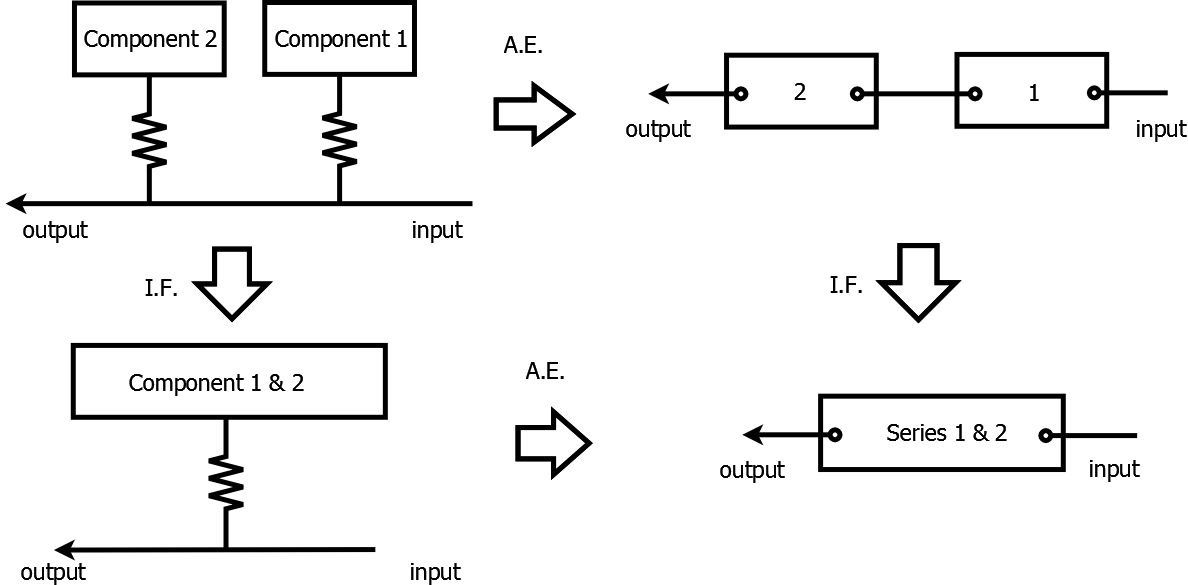}
\caption{The picture illustrates the main result that we which to prove
here, namely that the adiabatic elimination (A.E.) and the instantaneous
feedforward (I.F.) limits can be interchanged.}
\label{fig:CommutingLimits}
\end{figure}

Let us represent the local oscillators $a_{1}$ and $a_{2}$ in a combined
manner as 
\[
a=\left[ 
\begin{array}{c}
a_{1} \\ 
a_{2}
\end{array}
\right] .
\]
Then the first system is to be represented as $( S_{1}( k) ,L_{1}( k)
,K_{1}( k) ) $ where 
\begin{eqnarray*}
S_{1}( k) &=&S_{1}, \\
L_{1}( k) &=&kC_{1}a_{1}+G_{1}\equiv k[C_{1},0]a+G_{1}, \\
K_{1}( k) &=&k^{2}a_{1}^{\ast }A_{1}a_{1}+ka_{1}^{\ast }Z_{1}+kXa_{1}+R_{1}
\\
&\equiv &k^{2}a^{\ast }\left[ 
\begin{array}{cc}
A_{1} & 0 \\ 
0 & 0
\end{array}
\right] a+ka^{\ast }\left[ 
\begin{array}{c}
Z_{1} \\ 
0
\end{array}
\right] +k\left[ X_{1},0\right] a+R_{1}.
\end{eqnarray*}
Likewise, the second system is then represented as 
\begin{eqnarray*}
S_{2}( k) &=&S_{2}, \\
L_{2}( k) &\equiv &k[0,C_{2}]a+G_{2}, \\
K_{2}( k) &\equiv &k^{2}a^{\ast }\left[ 
\begin{array}{cc}
0 & 0 \\ 
0 & A_{2}
\end{array}
\right] a+ka^{\ast }\left[ 
\begin{array}{c}
0 \\ 
Z_{2}
\end{array}
\right] +k\left[ 0,X_{2}\right] a+R_{2}.
\end{eqnarray*}

\subsubsection{Adiabatic Elimination followed by Instantaneous Feedforward}

If we perform the adiabatic elimination first then we arrive at the two
systems $(j=1,2)$%
\begin{eqnarray*}
\hat{S}_{j} &=&( I+C_{j}A_{j}^{-1}C_{j}^{\ast }) S_{j} \\
\hat{L}_{j} &=&G_{j}-C_{j}A_{j}^{-1}Z_{j}, \\
\hat{K}_{j} &=&R_{j}-X_{j}A_{j}^{-1}Z_{j}.
\end{eqnarray*}
The instantaneous feedforward limit is then given by the series product

\begin{eqnarray}
S &=&\hat{S}_{2}\hat{S}_{1},  \notag \\
L &=&\hat{L}_{2}+\hat{S}_{2}\hat{L}_{1}  \notag \\
K &=&\hat{K}_{1}+\hat{K}_{2}-\hat{L}_{2}^{\ast }\hat{S}_{2}\hat{L}_{1}.
\label{AEthenIF}
\end{eqnarray}

\subsubsection{Instantaneous Feedfoward followed by Adiabatic Elimination}

We perform the series product $( S_{2}( k) ,L_{2}( k) ,K_{2}( k) )
\vartriangleleft ( S_{1}( k) ,L_{1}( k) ,K_{1}( k) ) $ first to obtain $( S_{%
\mathrm{ser}}( k) ,L_{\mathrm{ser}}( k) ,K_{\mathrm{ser}}( k) ) $ where 
\begin{eqnarray*}
S_{\mathrm{ser}}( k) &=&S_{2}S_{1}, \\
L_{\mathrm{ser}}( k) &=&L_{2}( k) +S_{2}L_{1}( k) \\
&\equiv &k[S_{2}C_{1},C_{2}]a+G_{1}+S_{2}G_{1}, \\
K_{\mathrm{ser}}( k) &=&K_{1}( k) +K_{2}( k) -L_{2}( k) ^{\ast }S_{2}( k)
L_{1}( k) \\
&=&k^{2}a^{\ast }\left[ 
\begin{array}{cc}
A_{1} & 0 \\ 
-C_{2}^{\ast }S_{2}C_{1} & A_{2}
\end{array}
\right] a+ka^{\ast }\left[ 
\begin{array}{c}
Z_{1} \\ 
Z_{2}-C_{2}^{\ast }S_{2}G_{1}
\end{array}
\right] \\
&&+k\left[ X_{1}-G_{2}^{\ast }S_{2}C_{1},X_{2}\right] a+R_{1}+R_{2}-G_{2}^{%
\ast }S_{2}G_{1}.
\end{eqnarray*}

Now, adiabatically eliminating the oscillators leads to the effective model $(%
\hat{S}_{\mathrm{ser}},\hat{L}_{\mathrm{ser}},\hat{H}_{\mathrm{ser}})$. Here
we have 
\begin{eqnarray}
\hat{S}_{\mathrm{ser}} &=&(I+[S_{2}C_{1},C_{2}]\left[ 
\begin{array}{cc}
A_{1} & 0 \\ 
-C_{2}^{\ast }S_{2}C_{1} & A_{2}
\end{array}
\right] ^{-1}\left[ 
\begin{array}{c}
C_{1}^{\ast }S_{2}^{\ast } \\ 
C_{2}^{\ast }
\end{array}
\right] )S_{2}S_{1},  \notag \\
\hat{L}_{\mathrm{ser}} &=&(G_{1}+S_{2}G_{1})-[S_{2}C_{1},C_{2}]\left[ 
\begin{array}{cc}
A_{1} & 0 \\ 
-C_{2}^{\ast }S_{2}C_{1} & A_{2}
\end{array}
\right] ^{-1}\left[ 
\begin{array}{c}
Z_{1} \\ 
Z_{2}-C_{2}^{\ast }S_{2}G_{1}
\end{array}
\right]  \notag \\
\hat{K}_{\mathrm{ser}} &=&(R_{1}+R_{2}-G_{2}^{\ast }S_{2}G_{1})  \notag \\
&&-\left[ X_{1}-G_{2}^{\ast }S_{2}C_{1},X_{2}\right] \left[ 
\begin{array}{cc}
A_{1} & 0 \\ 
-C_{2}^{\ast }S_{2}C_{1} & A_{2}
\end{array}
\right] ^{-1}\left[ 
\begin{array}{c}
Z_{1} \\ 
Z_{2}-C_{2}^{\ast }S_{2}G_{1}
\end{array}
\right] .  \label{IFthenAE}
\end{eqnarray}

\subsection{Commutativity of the Limits: Systems in Series}

The matrix inverse appearing in (\ref{IFthenAE}) is easily computed as a
special case of the well-known formula for the inverse of block matrices
(the earliest reference is credited to Banachiewicz \cite{Ban37}, see
subsection \ref{subsec:Schur}, however, like many matrix identities the
origins may be considerably older)

\begin{equation}
\left[ 
\begin{array}{cc}
A_{1} & 0 \\ 
-C_{2}^{\ast }S_{2}C_{1} & A_{2}
\end{array}
\right] ^{-1}=\left[ 
\begin{array}{cc}
A_{1}^{-1} & 0 \\ 
A_{2}^{-1}C_{2}^{\ast }S_{2}C_{1}A_1^{-1} & A_{2}^{-1}
\end{array}
\right] .  \label{inverse}
\end{equation}
This yields the explicit form 
\begin{eqnarray*}
\hat{S}_{\mathrm{ser}} &=&(I+S_{2}C_{1}A_{1}^{-1}C_{1}^{\ast }S_{2}^{\ast
}+C_{2}A_{2}^{-1}C_{2}^{\ast }S_{2}C_{1}A_{1}^{-1}C_{1}^{\ast }S_{2}^{\ast
}+C_{2}A_{2}^{-1}C_{2}^{\ast })S_{2}S_{1} \\
&=&(I+C_{2}A_{2}^{-1}C_{2}^{\ast })S_{2}(I+C_{1}A_{1}^{-1}C_{1}^{\ast })S_{1}
\\
&\equiv &\hat{S}_{2}\hat{S}_{1}.
\end{eqnarray*}
The coupling operator is 
\begin{eqnarray*}
\hat{L}_{\mathrm{ser}} &=&(G_{1}+S_{2}G_{1})-S_{2}C_{1}A_{1}^{-1}Z_{1} \\
&&-C_{2}A_{2}^{-1}C_{2}^{\ast
}S_{2}C_{1}A_{1}^{-1}Z_{1}-C_{2}A_{2}^{-1}Z_{2}+C_{2}A_{2}^{-1}C_{2}^{\ast
}S_{2}G_{1} \\
&=&(G_{2}-C_{2}A_{2}^{-1}Z_{2})+(I+C_{2}A_{2}^{-1}C_{2}^{\ast
})S_{2}(G_{1}-C_{1}A_{1}^{-1}Z_{1}) \\
&\equiv &\hat{L}_{2}+\hat{S}_{2}\hat{L}_{1}.
\end{eqnarray*}
Finally we see that 
\begin{eqnarray*}
\hat{K}_{\mathrm{ser}} &=&R_{1}+R_{2}-G_{2}^{\ast
}S_{2}G_{1}-X_{1}A_{1}^{-1}Z_{1} \\
&&+G_{2}^{\ast
}S_{2}C_{1}A_{1}^{-1}Z_{1}-X_{2}A_{2}^{-1}Z_{2}+X_{2}A_{2}^{-1}C_{2}^{\ast
}S_{2}(G_{1}-C_{1}Z_{1}).
\end{eqnarray*}
We would like to show that this equals $\hat{K}_{1}+\hat{K}_{2}-\hat{L}%
_{2}^{\ast }\hat{S}_{2}\hat{L}_{1}$, now we have 
\begin{multline*}
\hat{K}_{1}+\hat{K}_{2}-\hat{L}_{2}^{\ast }\hat{S}_{2}\hat{L}%
_{1}=R_{1}-X_{1}A_{1}^{-1}Z_{1}+R_{2}-X_{2}A_{2}^{-1}Z_{2} \\
-(G_{2}^{\ast }-Z_{2}^{\ast }A_{2}^{-1\ast
}C_{2}^*)(I+C_{2}A_{2}^{-1}C_{2}^{\ast })S_{2}(G_{1}-C_{1}A_{1}^{-1}Z_{1}),
\end{multline*}
and to compute this we note that $A_{2}=-\frac{1}{2}C_{2}^{\ast
}C_{2}-i\Omega _{2}$ so that 
\begin{eqnarray}
A_{2}^{-1\ast }C_{2}^*(I+C_{2}A_{2}^{-1}C_{2}^{\ast }) &=&A_{2}^{-1\ast
}(I+C_{2}^{\ast }C_{2}A_{2}^{-1})C_{2}^{\ast }  \notag \\
&=&A_{2}^{-1\ast }(A_{2}+C_{2}^{\ast }C_{2})A_{2}^{-1}C_{2}^{\ast }  \notag
\\
&=&A_{2}^{-1\ast }(-A_{2}^{\ast })A_{2}^{-1}C_{2}^{\ast }  \notag \\
&=&-A_{2}^{-1}C_{2}^{\ast },  \label{trick}
\end{eqnarray}
this yields 
\begin{eqnarray*}
\hat{K}_{1}+\hat{K}_{2}-\hat{L}_{2}^{\ast }\hat{S}_{2}\hat{L}_{1}
&=&R_{1}-X_{1}A_{1}^{-1}Z_{1}+R_{2}-X_{2}A_{2}^{-1}Z_{2} \\
&&-G_{2}^{\ast }(I+C_{2}A_{2}^{-1}C_{2}^{\ast
})S_{2}(G_{1}-C_{1}A_{1}^{-1}Z_{1}) \\
&&-Z_{2}^{\ast }A_{2}^{-1}C_{2}^{\ast }S_{2}(G_{1}-C_{1}A_{1}^{-1}Z_{1}).
\end{eqnarray*}
We therefore find that 
\[
\hat{K}_{\mathrm{ser}}-(\hat{K}_{1}+\hat{K}_{2}-\hat{L}_{2}^{\ast }\hat{S}%
_{2}\hat{L}_{1})=\left\{ X_{2}+G_{2}^{\ast }C_{2}+Z_{2}^{\ast }\right\}
A_{2}^{-1}C_{2}^{\ast }S_{2}(G_{1}-C_{1}Z_{1})
\]
however this vanishes identically by the second of identities (\ref
{identities}).

We therefore conclude that the model parameters in (\ref{AEthenIF}) are
identical with those in (\ref{IFthenAE}), therefore the adiabatic
elimination and instantaneous feedforward limit commute.

\subsection{Adiabatic Elimination: In-Loop Device}
\label{sec:sys-in-loop}
Next we want to extend our investigation to situations where we have a
non-trivial feedback arrangement as illustrated in Fig. \ref
{AE:fig:FeedbackArragement}. The question again is whether the two limits
commute.

\begin{figure}[htbp]
\centering
\includegraphics[width=0.50\textwidth]{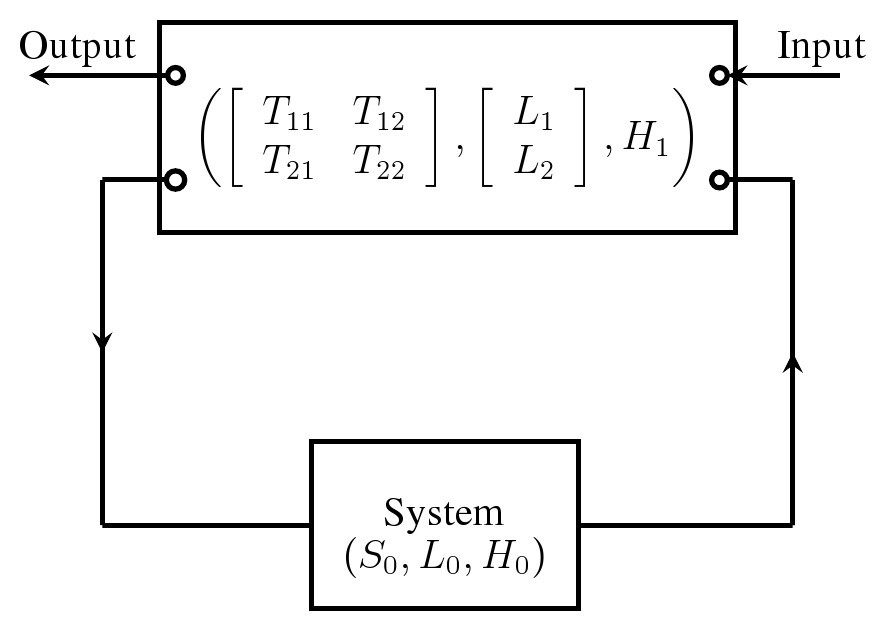} 
\caption{General feedback arrangement: The four port device interacts with
one external input, producing one external output and interacts with an
in-loop device by one internal in- and output field respectively.}
\label{AE:fig:FeedbackArragement}
\end{figure}

We start off with a simple model in-loop, taking the 4-port device to be a
beam splitter, modeled by a unitary matrix $T =[T_{jl}]$ with complex
entries and where coupling operators and the systems Hamiltonian are zero, $%
L_1 = L_2 = H_1 = 0$. We parameterize the in-loop device as 
\begin{eqnarray}
S_0(k) &=& S_0  \notag  \label{AE:eq:SimpleModel1} \\
L_0(k) &=& k \sqrt{\gamma} a_0  \notag \\
K_0(k) &=& - \frac{1}{2} k^2 a_0^\ast \gamma a_0.  \label{AE:eq:SimpleModel2}
\end{eqnarray}
and fix the beam splitter (scattering) matrix as (with $\alpha$ real) 
\begin{equation}
T = \left[ 
\begin{array}{cc}
\alpha & \sqrt{1 - \alpha^2} \\ 
\sqrt{1 - \alpha^2} & - \alpha
\end{array}
\right].
\end{equation}

\begin{figure}[htbp]
\centering
\includegraphics[width=0.40\textwidth]{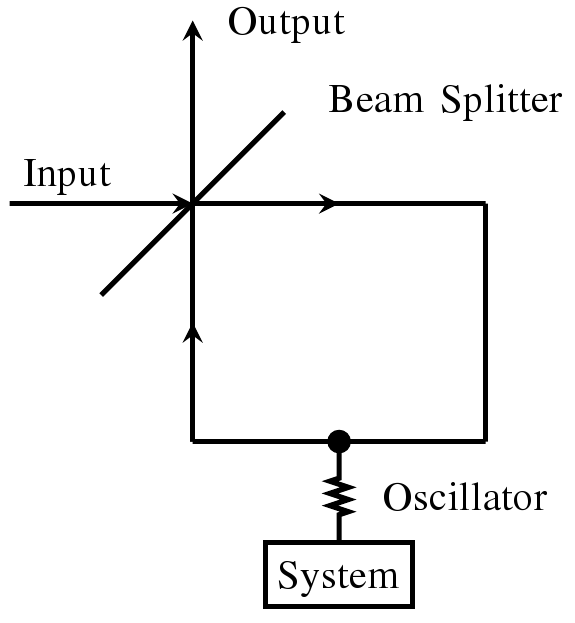}
\caption{Oscillator in-loop}
\label{fig:AE_BS_Oscy_System}
\end{figure}

Thus, the in-loop system consists only of a single oscillator coupled to the
in-loop field and no additional modes coupled to the oscillator. In terms of
operator parameters $S_0,C_0,G_0,A_0,Z_0,X_0,R_0$, see equation (\ref{scaled parameters}), 
\begin{eqnarray*}
S_0(k) &=&S_0 \otimes I \\
L_0(k) &=&k C_0 \otimes a +G_0 \otimes I \\
K_0(k) &=&k^{2} A_0 \otimes a^{\ast }a +k Z_0 \otimes
a^{\ast }+k X_0 \otimes a +R_0 \otimes I
\end{eqnarray*}
acting on the space $\mathfrak h_{\text{sys}}\otimes \mathfrak h_{\text{osc}%
}$, we see that 
\begin{eqnarray*}
S_{0} &=&S_{0} \\
A_{0} &=&-\frac{1}{2}\gamma  \\
C_{0} &=&\sqrt{\gamma } \\
Z_{0} &=&X_{0}=R_{0}=G_{0}=0.
\end{eqnarray*}
The coefficients for the single input single output device after taking the
instantaneous feedback limit of the arrangement of Fig.  \ref
{AE:fig:FeedbackArragement} were derived by Gough and James\cite{GoughJamesCMP09} and are
given by 
\begin{eqnarray}
S_{\text{red}} &=&T_{11}+T_{12}S_{0}(I-T_{22}S_{0})^{-1}T_{21}  \notag \\
L_{\text{red}} &=&T_{12}(I-T_{22}S_{0})^{-1}L_{0}  \notag \\
H_{\text{red}} &=&K_{0}-L_{0}^{\ast }S_{0}(I-T_{22}S_{0})^{-1}L_{0}.
\label{GJ_loop}
\end{eqnarray}
For the model (\ref{AE:eq:SimpleModel2}), the limit coefficients after taking the adiabatic elimination limit (see Theorem \ref{thm: adiabatic elimination}) are
given by 
 \begin{eqnarray}
\hat{S}_{0} &=&\left( I+C_{0}A_{0}^{-1}C_{0}^{\ast }\right) S_{0}=-S_{0}, 
\notag \\
\hat{L}_{0} &=&G_{0}-C_{0}A_{0}^{-1}Z_{0}=0,  \notag \\
\hat{K}_{0} &=&R_{0}-X_{0}A_{0}^{-1}Z_{0}=0.  \label{AE:eq:AEModellFirst}
\end{eqnarray}
Substituting into (\ref{GJ_loop}) we find that the reduced coefficients after the instantaneous feedback limit for the
model (\ref{AE:eq:AEModellFirst}) are 
\begin{eqnarray}
\hat{S} &=&\alpha +\sqrt{1-\alpha ^{2} }(- S_{0}) \frac{1} {1-(-\alpha)(- S_{0}) } \sqrt{1 -\alpha^2} =\frac{\alpha -S_{0}%
}{1-\alpha S_{0}}  \label{AE:eq:S_AEFirstIFSecond} \\
\hat{L} &=&0  \notag \\
\hat{K} &=&0.  \notag
\end{eqnarray}

We now exchange the order in which we perform the limits.
The instantaneous feedback limit of the model before taking the adiabatic
elimination limit yields: 
\begin{eqnarray*}
\tilde S (k) &=& T_{11} + T_{12} S_0 \left( I - T_{22}S_0\right)^{-1}T_{21}
= \alpha + \left( 1 - \alpha^2\right)S_0 \frac{1}{1+\alpha S_0} \\
\tilde L (k) &=& k \sqrt{1 - \alpha^2} \frac{1}{1 + \alpha S_0}\sqrt{\gamma}
a_0 \\
\tilde K (k) &=& K_0(k) - L_0(k)^\ast \frac{S_0T_{22}}{1-S_0T_{22}}L_0(k) =
k^2 a_0^\ast \left( -\frac{1}{2}\gamma + \gamma \frac{\alpha S_0}{1 + \alpha
S_0}\right)a_0.
\end{eqnarray*}
where the operator parameters are 
\begin{eqnarray*}
A &=& -\frac{\gamma}{2}\frac{1 - \alpha S_0}{1 + \alpha S_0} \\
C &=& \frac{\sqrt{1 - \alpha^2}\sqrt{\gamma}}{1 + \alpha S_0} \\
S &=& \alpha + \frac{\left( 1 - \alpha^2\right)S_0}{1+\alpha S_0} \\
G &=& X = Z = R = 0
\end{eqnarray*}
The A.E. of the I.F. limit model then results in (here $|1+\alpha S_0|^2=(1+\alpha S_0^*)(1+\alpha S_0)$)
\begin{eqnarray}
\hat{S} &=&\left( 1-\frac{(1-\alpha ^{2})\gamma }{|1+\alpha S_{0}|^{2}}\frac{%
2}{\gamma }\frac{1+\alpha S_{0}}{1-\alpha S_{0}}\right) \left( \alpha +\frac{%
(1-\alpha ^{2})S_{0}}{1+\alpha S_{0}}\right) 
 \notag \\
&=&\frac{\left( \alpha S_{0}^{\ast }-1\right) \left( 1+\alpha S_{0}\right) }{%
\left( 1+\alpha S_{0}^{\ast }\right) \left( 1-\alpha S_{0}\right) }\left( 
\frac{\alpha +S_{0}}{1+\alpha S_{0}}\right)  \notag \\
&=&\frac{\alpha -S_{0}}{1-\alpha S_{0}}.
\label{AE:eq:S_IFFirstAESecond}
\end{eqnarray}

We see that the limits  do in fact commute since we obtain the same operator $\hat S$ in (\ref{AE:eq:S_AEFirstIFSecond}) and (\ref{AE:eq:S_IFFirstAESecond}), likewise for the operators $\hat L$ and $\hat K$. The apparently miraculous agreement comes as a general feature that will be observed in more complex networks.  Our approach will be to encode both these limits
as instances of a Schur complement operation: the miraculous agreements that
one encounters in a case-by-case study are in fact just
by-product of these operations.

If $S_0=1$ then in quantum optics the system  $(S_0(k),L_0(k),K_0(k))$ represents a  one-sided optical cavity  in which  the coupling coefficient $k\sqrt{\gamma}$ of the partially transmitting cavity mirror is large (for large $k$). The calculations of this section show that for large $k$ the network in Fig.~\ref{fig:AE_BS_Oscy_System} can be consistently approximated by an {\em effective} device that  shifts the phase of the output field with respect to the input field by an amount determined by the parameters of the cavity and beam splitter. Alternatively, one can also think of the network of  Fig.~\ref{fig:AE_BS_Oscy_System} as approximately implementing a phase shifting device.

\section{Adiabatic Elimination Within General Networks}
\label{sec:AE-QFN}
The situation of two systems in cascade, as depicted in Fig. \ref
{fig:series2}, is the simplest form of a nontrivial quantum feedback
network. We remark that at no stage of the calculations did we assume that
the operators describing the first system commuted with those of the second
system. Indeed, the series product is valid even if we do not assume that we
are dealing with separate cascaded systems and is applicable to the problem
of feedback into the same system.

In Fig. \ref{fig:qfn} we describe a somewhat more engorged quantum
feedback network featuring feedback and feedforward interconnections. For each component of
the network, we will have the same multiplicity for the input fields as the
output fields, though we split up the inputs and outputs geometrically to
indicate different physical connections for the fields. The unitary $S$ for
a given component now additionally implies that we can use the component to
mix the input fields, with a beam-splitter being the very special case where
the entries of $S$ are just complex constants. This feature introduces the
possibility of topologically non-trivial feedback loops that were not
present in the simple situations of direct feedforward or feedback occurring
for systems in series.

We now aim to investigate the procedure of adiabatic elimination of fast
degrees of freedom from components in a general quantum feedback network
and, in particular, answer the question of whether this commutes with the
Markov limit in which we take vanishing time lags for the various internal
fields in the network. The adiabatic elimination of oscillators for
components in series will then be a very specific case of this general
theory.

\begin{figure}[h]
\centering
\includegraphics[width=0.60\textwidth]{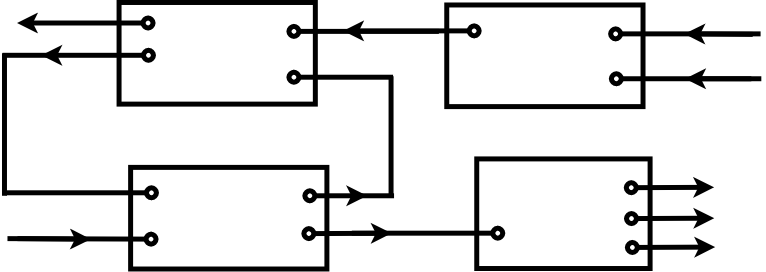}
\caption{Quantum feedback network}
\label{fig:qfn}
\end{figure}

The essentially mathematical element in the investigation will be that both
the adiabatic elimination limit and the instantaneous feedback limit for a
general quantum feedback network are actually instances of a Schur
complement of the It\={o} matrix $\mathbf{G}$.

\subsection{The Schur Complement}

We begin by recalling some of the basic definitions and notations relevant for Schur complements. For general reviews, see the survey article by Oullette\cite{DVO81} or 
the book chapter by Horn and Zhang\cite{Zhang}. We shall elaborate on several of the well-known results presented in the reviews, largely to take account of the fact that
we are dealing with block-partitioned matrices with operator entries. In particular we give some minor technical extensions where we are explicit about the domains, 
kernel spaces and image spaces on which the operators act.

The Schur complement of an $\left( n+m\right) $ square matrix $M=\left[ 
\begin{array}{cc}
A & B \\ 
C & D
\end{array}
\right] $ relative to the $m\times m$ sub-block $A$ is defined to be 
\[
M/A=D-CA^{-1}B
\]
under the assumption that $A$ is invertible. We note the following
elementary formula, due to Banachiewicz \cite{Ban37}, for invertible $M$%
\[
\left[ 
\begin{array}{cc}
A & B \\ 
C & D
\end{array}
\right] ^{-1}=\left[ 
\begin{array}{cc}
A^{-1}+A^{-1}B\left( M/A\right) ^{-1}CA^{-1} & -A^{-1}B\left( M/A\right)
^{-1} \\ 
-\left( M/A\right) ^{-1}CA^{-1} & \left( M/A\right) ^{-1}
\end{array}
\right] .
\]

\begin{definition}
A matrix $M^{-}$ is a generalized inverse for a square matrix $M$ if we have 
$MM^{-}M=M$. The generalized Schur complement of $M = \left[ 
\begin{array}{cc}
A & B \\ 
C & D
\end{array}
\right]$ is then defined to be 
\begin{equation}
M/A = D - CA^- B .
\end{equation}
\end{definition}

\begin{lemma}
\label{lm:GS-exist} The generalized Schur complement $M/A$ is well-defined and independent of
the choice of the generalized inverse $A^-$ so long as we have the following
inclusions of image spaces $\mathrm{im}\,{B} \subseteq \mathrm{im}\,{A}$ and
kernel spaces $\mathrm{ker}\,{A} \subseteq \mathrm{ker}\,{C}$.
\end{lemma}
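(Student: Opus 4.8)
The plan is to reduce the claim to a single algebraic identity. Writing $M/A = D - C A^- B$ and noting that $D$ does not involve $A^-$, independence of the choice of generalized inverse is equivalent to showing that $C A_1^- B = C A_2^- B$ for any two generalized inverses $A_1^-, A_2^-$ of $A$. I would begin by extracting from the defining relation $A A^- A = A$ the two structural facts that underlie everything: the products $P := A A^-$ and $Q := A^- A$ are idempotent (for instance $P^2 = A A^- A A^- = (A A^- A) A^- = A A^- = P$), hence are projections. A short algebraic check — valid without any finite-dimensionality assumption, and so safe for our bounded-operator entries — then shows that $P$ restricts to the identity on $\mathrm{im}\,A$ (if $y = Ax$ then $Py = A A^- A x = Ax = y$) and that $\mathrm{ker}\,Q = \mathrm{ker}\,A$.

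Next I would convert the two hypotheses into operator identities that hold for \emph{every} generalized inverse. From $\mathrm{im}\,B \subseteq \mathrm{im}\,A$: every vector of the form $Bv$ lies in $\mathrm{im}\,B \subseteq \mathrm{im}\,A$, and since $A A^-$ fixes each element of $\mathrm{im}\,A$ we obtain $A A^- B = B$. From $\mathrm{ker}\,A \subseteq \mathrm{ker}\,C$: the complementary projection $I - A^- A$ maps into $\mathrm{ker}(A^- A) = \mathrm{ker}\,A \subseteq \mathrm{ker}\,C$, so $C(I - A^- A) = 0$, that is $C A^- A = C$. The point worth emphasizing in the operator-entry setting is that these are argued vector-by-vector on the relevant image and kernel spaces rather than through ranks or determinants, and crucially each identity holds for an arbitrary choice of generalized inverse.

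With both identities in hand the conclusion follows by a sandwiching manipulation: given two generalized inverses $A_1^-, A_2^-$,
\[
C A_1^- B = C A_1^- ( A A_2^- B ) = ( C A_1^- A ) A_2^- B = C A_2^- B,
\]
where the first equality uses $B = A A_2^- B$ and the last uses $C A_1^- A = C$. Hence $C A^- B$, and therefore $M/A = D - C A^- B$, takes the same value for every generalized inverse, which is precisely the asserted well-definedness.

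I expect the only genuine subtlety — the main obstacle — to be housekeeping about the operator framework rather than the algebra: one must confirm that a generalized inverse exists and is itself a bounded operator, so that $A A^-$ and $A^- A$ are bona fide (bounded) projections, and one must interpret the image and kernel inclusions exactly as the pointwise identities $A A^- B = B$ and $C A^- A = C$. Once these are pinned down, the equality $C A_1^- B = C A_2^- B$ is purely formal and requires no appeal to dimension.
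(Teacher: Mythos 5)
Your proof is correct and is essentially the argument the paper intends: the two identities you derive, $AA^-B=B$ from $\mathrm{im}\,B\subseteq\mathrm{im}\,A$ and $CA^-A=C$ from $\ker A\subseteq\ker C$, are precisely the content of the paper's Lemma \ref{lm:RL-inverse}, and the concluding sandwich $CA_1^-B=(CA_1^-A)A_2^-B=CA_2^-B$ is the standard step the paper defers to Horn and Zhang \cite{Zhang}. The only stylistic difference is that the paper treats those identities as a separately stated lemma while you re-derive them inline via the projections $AA^-$ and $A^-A$, which works verbatim for bounded Hilbert-space operator entries.
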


Note that $\mathrm{im}\,{B} \subseteq \mathrm{im}\,{A}$ occurs if and only if $%
\mathrm{ker}\,{A^\ast} \subseteq \mathrm{ker}\,{B^\ast}$. (Recall that the
image, or column space, of a matrix is the span of its columns, or more
generally $\mathrm{im}\,(M)= \mathrm{ker}\,(M^{\ast })^{\perp }$.)

\begin{lemma}
\label{lm:RL-inverse} For two matrices $M$ and $N$ and some generalized inverse $M^-$ of $M$ we
have that 
\[
M M^- N = N \text{ if } \mathrm{im}\,{N} \subset \mathrm{im}\,{M}
\]
and 
\[
NM^-M = N \text{ if } \mathrm{ker}\,{M} \subset \mathrm{ker}\,{N}.
\]
\end{lemma}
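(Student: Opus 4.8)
The plan is to use nothing beyond the defining identity $MM^-M=M$ of the generalized inverse, rephrasing each asserted equality as the statement that a suitable operator vanishes on the relevant subspace. I will deliberately avoid the tempting factorization route (writing $N=MW$ whenever $\mathrm{im}\,N\subseteq\mathrm{im}\,M$), since producing such an intertwiner $W$ with good properties would invoke closed-range hypotheses; working instead with the annihilating operators $I-MM^-$ and $I-M^-M$ requires only the set-theoretic inclusions of the (not necessarily closed) image and kernel spaces, and so is valid for the bounded-operator-valued block matrices we are dealing with here.

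For the first identity, set $E=MM^-$ and note that $MM^-M=M$ reads $EM=M$, i.e. $(I-E)M=0$. Thus $I-E$ annihilates every vector of the form $Mx$, that is, $I-E$ vanishes on $\mathrm{im}\,M$. Under the hypothesis $\mathrm{im}\,N\subseteq\mathrm{im}\,M$, each vector $Ny$ lies in $\mathrm{im}\,M$, so $(I-E)Ny=0$ for all $y$; hence $(I-E)N=0$, which is exactly $MM^-N=N$.

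For the second identity I argue dually. Set $F=M^-M$; the same relation now reads $MF=M$, i.e. $M(I-F)=0$, so $\mathrm{im}\,(I-F)\subseteq\mathrm{ker}\,M$. Under the hypothesis $\mathrm{ker}\,M\subseteq\mathrm{ker}\,N$ we then have $\mathrm{im}\,(I-F)\subseteq\mathrm{ker}\,N$, whence $N(I-F)=0$, which is precisely $NM^-M=N$. One could alternatively deduce the second identity from the first by passing to adjoints, using the equivalence $\mathrm{ker}\,M\subseteq\mathrm{ker}\,N \iff \mathrm{im}\,N^\ast\subseteq\mathrm{im}\,M^\ast$ together with the fact that $(M^\ast)^{-\ast}$ is a generalized inverse of $M$; but that route only establishes the claim for that particular generalized inverse, whereas the direct argument above applies verbatim to the given $M^-$.

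There is no substantive obstacle in this lemma; the only point requiring a little care is the correct infinite-dimensional reading of $\mathrm{im}$ and $\mathrm{ker}$, and this is handled automatically by phrasing everything in terms of operators that annihilate a subspace rather than in terms of range factorizations.
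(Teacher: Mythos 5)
Your proof is correct, and it is exactly the argument the paper has in mind: the paper omits the proof, citing Horn and Zhang with the remark that the lemma is ``a straightforward consequence of the definition of a generalized inverse and the postulated image/kernel inclusions,'' which is precisely your derivation via $(I-MM^-)M=0$ and $M(I-M^-M)=0$. Your added care about not invoking a factorization $N=MW$ (and thus avoiding any closed-range assumption) is a sensible observation for the operator-valued setting, but it does not change the substance of the argument.
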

For the proofs of these lemmata, see Horn and Zhang\cite{Zhang}; they are a straightforward consequence of the definition of a generalized inverse and the postulated image/kernel inclusions.

\label{subsec:Schur}
The Schur complement and Lemmata \ref{lm:GS-exist} and \ref{lm:RL-inverse} above may be generalized to matrices with operator entries. %
 Let $M$ be a bounded invertible operator on a Hilbert
space $\mathfrak{H}$ and let us fix a decomposition $\mathfrak{H}=\oplus
_{j\in \mathfrak{I}}\mathfrak{H}_{j}$ for some finite index set $\mathfrak{I}
$. We denote by $x_{j}$ the component of a vector $x\in \mathfrak{H}$ in $%
\mathfrak{H}_{j}$, and $M_{jl}$ the block component of $M$ mapping from $%
\mathfrak{H}_{j}$ to $\mathfrak{H}_{l}$. For $A=\left\{ a_{1},\cdots
,a_{n}\right\} $ and $B=\left\{ b_{1},\cdots ,b_{m}\right\} $ non-empty
subsets of $\mathfrak{I}$ we write 
\[
x_{A}=\left[ 
\begin{array}{c}
x_{a_{1}} \\ 
\vdots \\ 
x_{a_{m}}
\end{array}
\right] ,M_{A,B}=\left[ 
\begin{array}{ccc}
M_{a_{1}b_{1}} & \cdots & M_{a_{1}b_{m}} \\ 
\vdots & \ddots & \vdots \\ 
M_{a_{n}b_{m}} & \cdots & M_{a_{n}b_{m}}
\end{array}
\right] .
\]
The single equation $Mx=u$ then corresponds to the coarsest block form $%
M_{\mathfrak{I},\mathfrak{I}}x_{\mathfrak{I}}=u_{\mathfrak{I}}$. In contrast, the full system of equations $\sum_{l \in
\mathfrak{I}}M_{jl}x_{l}=u_{j}$ gives the finest block form. More generally, we may
examine intermediate partitions. Let $A$ and $B$ be non-trivial (i.e.,
non-empty, proper) subsets of $\mathfrak{I}$ then the equation $Mx=u$ may be written as 
\begin{equation}
\left[ 
\begin{array}{cc}
M_{A,B} & M_{A,B^{\prime }} \\ 
M_{A^{\prime },B} & M_{A^{\prime },B^{\prime }}
\end{array}
\right] \left[ 
\begin{array}{c}
x_{B} \\ 
x_{B^{\prime }}
\end{array}
\right] =\left[ 
\begin{array}{c}
u_{A} \\ 
u_{A^{\prime }}
\end{array}
\right] , ,  \label{Mx=u}
\end{equation}
where $A^{\prime}$ denotes the complement of set $A$ in $\mathfrak{I}$, and
the inverse relation is 
\begin{equation}
\left[ 
\begin{array}{c}
x_{B} \\ 
x_{B^{\prime }}
\end{array}
\right] =\left[ 
\begin{array}{cc}
(M^{-1})_{B,A} & (M^{-1})_{B,A^{\prime }} \\ 
(M^{-1})_{B^{\prime },A} & (M^{-1})_{B^{\prime },A^{\prime }}
\end{array}
\right] \left[ 
\begin{array}{c}
u_{A} \\ 
u_{A^{\prime }}
\end{array}
\right] .  \label{x=M-1u}
\end{equation}
We now recall the definition of the generalized Schur complement, sometimes also known
as the shorted operator, in the case where $M$ need not be invertible.

\begin{definition}
\label{def:Schur} Let $A$ and $B$ be non-trivial subsets of the index $%
\mathfrak{I}$, and let $C$ be a non-trivial subset of $A$, and $D$ be a
nontrivial subset of $B$. Furthermore take $|A|=|B|$ and $|C|=|D|$. Suppose
that the sub-block $M_{C,D}$ possesses a generalized inverse denoted by $%
(M_{C,D})^-$, then the Schur complement of $M_{A,B}$ relative to $M_{C,D}$
is defined to be 
\[
M_{A,B}/M_{C,D}=M_{A/C,B/D}-M_{A/C,D}(M_{C,D})^-M_{C,B/D}.
\]
In the special case where $A=B=%
\mathfrak{J}$, we shall simply write $M/M_{C,D}$ for $M_{\mathfrak{J},%
\mathfrak{J}}/M_{C,D}$.
\end{definition}

The generalized Schur complement is well-defined and independent of the
choice of generalized inverse so long as the column space $\mathrm{im}%
(M_{C,B/D})$ is contained in $\mathrm{im}(M_{C,D})$, and $\mathrm{ker}
(M_{C,D})$ is contained in $\mathrm{ker} (M_{A/C,D})$. In particular, if the conditions of the Lemma \ref{lm:GS-exist} are met then
we may fix a particular generalized inverse such as the Moore-Penrose
inverse. We also remark that we may readily extend the above notation to the
case where different direct-sum decompositions of $\mathfrak{H}$\ are used
for the columns and rows of $M$.

We shall also require the extension of the Banachiewicz formula to generalized inverses. The proof for finite rank matrix operators is due to Marsaglia and Styan \cite{MS}, 
and may be found as Theorem 4.6 in the article of Ouellette \cite{DVO81}. In the next lemma, we strengthen this to deal with general Hilbert space operators.

\begin{lemma}
\label{lem:gen inverse formula} Let $M$ be partitioned according to $M = %
\left[ 
\begin{array}{cc}
A & B \\ 
C & D
\end{array}
\right]$. We suppose that $\mathrm{im}\,{B} \subseteq \mathrm{im}\,{A}$, $%
\mathrm{ker}\,{A} \subseteq \mathrm{ker}\,{C}$, and therefore the generalized
Schur complement $X = M/A= D - C A^- B $ is well-defined and independent of
the choice of generalized inverse $A^-$ to $A$. Then the generalized inverse
of $M$ is given by 
\[
M^- = \left[ 
\begin{array}{cc}
A^- + A^- B X^- CA^- & -A^- B X^- \\ 
- X^- C A^- & X^-
\end{array}
\right].
\]
\end{lemma}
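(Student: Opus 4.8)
The plan is to reduce everything to the single identity $MM^-M = M$ and to organize that verification around a block-triangular factorization of $M$ through its Schur complement. First I would introduce the three block factors
\[
L=\left[\begin{array}{cc} I & 0 \\ CA^- & I \end{array}\right],\qquad D=\left[\begin{array}{cc} A & 0 \\ 0 & X \end{array}\right],\qquad R=\left[\begin{array}{cc} I & A^-B \\ 0 & I \end{array}\right],
\]
and claim that $M = LDR$ under the standing hypotheses. Multiplying out, the off-diagonal blocks of $LDR$ are $AA^-B$ and $CA^-A$, so they reproduce $B$ and $C$ precisely when $AA^-B=B$ and $CA^-A=C$; these are exactly Lemma \ref{lm:RL-inverse} applied to the inclusions $\mathrm{im}\,B \subseteq \mathrm{im}\,A$ and $\mathrm{ker}\,A \subseteq \mathrm{ker}\,C$. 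The $(1,1)$ block equals $A$ automatically, and the $(2,2)$ block is $CA^-AA^-B + X = (CA^-A)A^-B + X = CA^-B + X = D$, the final equality being the definition $X = D - CA^-B$. This establishes $M = LDR$.

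Next I would record that $L$ and $R$ are honestly invertible, not merely generalized-invertible, with inverses obtained by negating the single off-diagonal block,
\[
L^{-1}=\left[\begin{array}{cc} I & 0 \\ -CA^- & I \end{array}\right],\qquad R^{-1}=\left[\begin{array}{cc} I & -A^-B \\ 0 & I \end{array}\right],
\]
a purely algebraic fact needing no hypotheses. A direct block multiplication then shows that the operator claimed for $M^-$ in the statement is exactly
\[
M^- = R^{-1}\left[\begin{array}{cc} A^- & 0 \\ 0 & X^- \end{array}\right]L^{-1}.
\]
With these two observations the verification is immediate: substituting $M = LDR$ and the displayed form of $M^-$ gives
\[
MM^-M = LDR\,R^{-1}\left[\begin{array}{cc} A^- & 0 \\ 0 & X^- \end{array}\right]L^{-1}\,LDR = L\,D\left[\begin{array}{cc} A^- & 0 \\ 0 & X^- \end{array}\right]D\,R,
\]
and the central product is $\mathrm{diag}(AA^-A,\,XX^-X) = \mathrm{diag}(A,X) = D$ by the defining relation of a generalized inverse applied to $A^-$ and to $X^-$. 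Hence $MM^-M = LDR = M$, as required.

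I expect the main obstacle to be the factorization step $M = LDR$ rather than the final cancellation. In the operator setting the generalized inverses $A^-$ and $X^-$ are by no means reflexive or self-adjoint, so one must be disciplined about which one-sided identities are genuinely available; the whole argument is arranged so that only the four relations $AA^-A=A$, $XX^-X=X$, $AA^-B=B$, $CA^-A=C$ are ever invoked, the last two being exactly where the image and kernel inclusions enter through Lemma \ref{lm:RL-inverse}. In particular the final step uses $AA^-A=A$ and $XX^-X=X$ alone and never any reflexivity relation such as $A^-AA^-=A^-$. A secondary point worth stating explicitly is that $X = M/A$ is itself well-defined independently of the choice of $A^-$ by Lemma \ref{lm:GS-exist}, so the construction does not silently depend on an arbitrary selection made before $X^-$ is even introduced; and since we only assert $MM^-M=M$, it is harmless that different choices of $A^-$ and $X^-$ yield different generalized inverses $M^-$.
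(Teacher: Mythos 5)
Your proof is correct, but it is organized differently from the paper's. The paper proves $MM^-M=M$ by brute force: it multiplies out all four blocks of $MM^-M$ and checks, using $AA^-A=A$, $(AA^--I)B=0$, $C(A^-A-I)=0$ and $XX^-X=X$, that they reproduce $A$, $B$, $C$, $D$ respectively. You instead establish the block-triangular factorization $M=LDR$ with $D=\mathrm{diag}(A,X)$ and unitriangular $L$, $R$, note that the claimed $M^-$ is precisely $R^{-1}\,\mathrm{diag}(A^-,X^-)\,L^{-1}$, and then the whole verification collapses to the central cancellation $D\,\mathrm{diag}(A^-,X^-)\,D=\mathrm{diag}(AA^-A,\,XX^-X)=D$, since $L$ and $R$ are honestly invertible and cancel against their inverses. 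The two arguments rest on exactly the same four one-sided identities --- $AA^-B=B$ and $CA^-A=C$ (via Lemma \ref{lm:RL-inverse} and the image/kernel inclusions), plus the defining relations of the generalized inverses $A^-$ and $X^-$ --- so neither is more general than the other, and both remain valid for block matrices with Hilbert-space operator entries because only these algebraic relations are ever used. What your route buys is economy and transparency: the hypotheses are quarantined inside the single factorization step, the final check is a one-line cancellation instead of four separate block computations, and the factorization explains where the Banachiewicz-type formula for $M^-$ comes from, namely as the natural candidate inverse of $LDR$. What the paper's route buys is self-containedness: it needs no preliminary factorization claim, at the cost of lengthier block algebra. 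Your closing remarks --- that no reflexivity of the generalized inverses is needed, and that non-uniqueness of $M^-$ is harmless since only $MM^-M=M$ is asserted --- are accurate and worth stating.
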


\begin{proof}
We multiply out $M M^- M$ in block form. The top left block will be 
\begin{eqnarray*}
(M M^- M)_{11} &=& A A^- A + A A^- B X^- C A^- A - B X^- C A^- A \\
&&- A A^- B X^- C + BX^- C \\
&=& A A^- A + A A^- B X^- C(A^- A - 1) - BX^- C ( A^- A -1) \\
&=& A A^- A + (A A^- -1 )BX^- C(A^- A - 1) \\
&=& A,
\end{eqnarray*}
where the last step follows because $A A^- A = A$ and $(A A^- -1 )B = 0$
under the assumptions that $\mathrm{im}\,{B} \subseteq \mathrm{im}\,{A}$.
Similarly 
\begin{eqnarray*}
(M M^- M)_{12} &=& B + (A A^- - 1)B + \\
&& (A A^- - 1)B X^- C A^- B (A A^- - 1)B X^- D \\
&=& B,
\end{eqnarray*}
since under the assumption $\mathrm{im}{B} \subseteq \mathrm{im}{A}$ we have
that $A A^- B = B$ and so $\left( A A^- - 1 \right)B = 0$; 
\begin{eqnarray*}
(M M^- M)_{21} &=& C + C (A^- A - 1) + (C A^- B -D)X^- C (A^- A - 1) \\
&=& C,
\end{eqnarray*}
because of the assumption $\mathrm{ker}\,{A} \subseteq \mathrm{ker}\,{C}$ we
have $C A^- A = C$ and $C(A^- A - 1)=0$; and 
\begin{eqnarray*}
(M M^- M)_{22} &=& D - (D - C A^- B) - (D - C A^- B )X^- ( D - C A^- B) \\
&=& D - X + XX^- X \\
&=& D,
\end{eqnarray*}
since $X = M/A = D - C A^- B$ and $X X^- X = X$. Collecting these results we
have that $M M^- M = M$, as required.
\end{proof}

Now, as  a corollary to Lemma \ref{lem:gen inverse formula} we obtain
the generalized Banachiewicz formula:
\begin{eqnarray*}
M^{-} _{B,A} &=& M_{A,B}^{-}+ M_{A,B} ^{-}M_{A,B^{\prime }}(
M/M_{A,B}) ^{-}M_{A^{\prime },B} M_{A,B}^{-}, \\
M^{-}_{B,A^{\prime }} &=&- M_{A,B}^{-}M_{A,B^{\prime }}( M/M_{A,B})
^{-}, \\
M^{-}_{B^{\prime },A} &=&- (M/M_{A,B})^{-}M_{A^{\prime },B} M_{A,B}
^{-}, \\
M^{-}_{B^{\prime },A^{\prime }} &=& (M/M_{A,B})^{-}.
\end{eqnarray*}

We now wish to establish the property of commutativity of successive Schur complementation as this shall be the main technical result required in this paper.

\begin{lemma}[Successive complementation rule]
\label{lem:scrule} Suppose that $A,B,C$ is a partition of the index set $%
\mathfrak{I}$ (that is, $A,B,C$ are disjoint non-empty subsets whose union
is $\mathfrak{J}$) then, whenever the generalized Schur complements are well-defined, we have
the rule 
\begin{eqnarray}
M/M_{B\cup C,B\cup C} &=&(M/M_{C,C})/(M/M_{C,C})_{B,B}  \notag \\
&=&(M/M_{B,B})/((M/M_{B,B}))_{C,C}.  \label{Scrule}
\end{eqnarray}
\end{lemma}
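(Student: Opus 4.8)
The plan is to establish the first equality $M/M_{B\cup C,B\cup C}=(M/M_{C,C})/(M/M_{C,C})_{B,B}$ directly; the second equality then follows by interchanging the roles of $B$ and $C$, since nothing in the argument distinguishes them. Throughout I write $N=M/M_{C,C}$, an operator on $\mathfrak{H}_{A}\oplus\mathfrak{H}_{B}$ (the complement of $C$ in $\mathfrak{I}$ being $A\cup B$), whose blocks are, straight from Definition \ref{def:Schur}, $N_{P,Q}=M_{P,Q}-M_{P,C}(M_{C,C})^{-}M_{C,Q}$ for $P,Q\in\{A,B\}$. In particular $N_{B,B}=M_{B,B}-M_{B,C}(M_{C,C})^{-}M_{C,B}$ is precisely the sub-Schur-complement $M_{B\cup C,B\cup C}/M_{C,C}$, so the right-hand side $N/N_{B,B}=N_{A,A}-N_{A,B}(N_{B,B})^{-}N_{B,A}$ is well-defined by hypothesis.

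The key device is to feed a cleverly chosen generalized inverse into the left-hand side. Because $M/M_{B\cup C,B\cup C}$ is independent of the choice of $(M_{B\cup C,B\cup C})^{-}$, I am free to take the explicit block form supplied by Lemma \ref{lem:gen inverse formula}. Partitioning $M_{B\cup C,B\cup C}$ with $M_{C,C}$ as pivot and $M_{B,B}$ as trailing block, that lemma gives
\[
(M_{B\cup C,B\cup C})^{-}=\left[\begin{array}{cc}
(M_{C,C})^{-}+(M_{C,C})^{-}M_{C,B}(N_{B,B})^{-}M_{B,C}(M_{C,C})^{-} & -(M_{C,C})^{-}M_{C,B}(N_{B,B})^{-}\\
-(N_{B,B})^{-}M_{B,C}(M_{C,C})^{-} & (N_{B,B})^{-}
\end{array}\right],
\]
its Schur complement being exactly $N_{B,B}$. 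This is legitimate provided $\mathrm{im}\,M_{C,B}\subseteq\mathrm{im}\,M_{C,C}$ and $\mathrm{ker}\,M_{C,C}\subseteq\mathrm{ker}\,M_{B,C}$, inclusions that already underlie the well-definedness of $M/M_{C,C}$. I fix one generalized inverse $(M_{C,C})^{-}$ and one $(N_{B,B})^{-}$ and use these same choices throughout, so that the block formula above and the right-hand side $N/N_{B,B}$ are built from identical ingredients.

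I would then substitute this block inverse into $M/M_{B\cup C,B\cup C}=M_{A,A}-M_{A,B\cup C}(M_{B\cup C,B\cup C})^{-}M_{B\cup C,A}$, with $M_{A,B\cup C}=[\,M_{A,C}\ \ M_{A,B}\,]$ and $M_{B\cup C,A}=[\,M_{C,A}\ \ M_{B,A}\,]^{T}$, and expand; and independently expand $N_{A,A}-N_{A,B}(N_{B,B})^{-}N_{B,A}$ after inserting the block formulas for $N_{A,A},N_{A,B},N_{B,A}$. A direct computation shows that both expansions collapse to the same six terms, the cross contributions in $M_{C,B}(N_{B,B})^{-}M_{B,C}$, in $M_{C,B}(N_{B,B})^{-}M_{B,A}$, and in $M_{A,B}(N_{B,B})^{-}M_{B,C}$ occurring with matching signs. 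Hence the two sides coincide identically, with no appeal to the absorption identities of Lemma \ref{lm:RL-inverse} at this stage; since the left-hand generalized Schur complement does not depend on the choice of inverse, the resulting identity holds as an equality of operators.

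The main obstacle is not the algebra, which is routine, but the bookkeeping of image and kernel inclusions in the operator (as opposed to finite-dimensional) setting, where the Marsaglia--Styan argument does not apply verbatim. I must check that the hypothesis ``whenever the generalized Schur complements are well-defined'' simultaneously furnishes the inclusions needed to invoke Lemma \ref{lem:gen inverse formula} on $M_{B\cup C,B\cup C}$, the inclusions guaranteeing that $M/M_{B\cup C,B\cup C}$ is choice-independent (so that selecting the Lemma \ref{lem:gen inverse formula} inverse is harmless), and the inclusions making $N/N_{B,B}$ well-defined. Concretely, the delicate step is to verify, using Lemmata \ref{lm:GS-exist} and \ref{lm:RL-inverse}, that well-definedness of $M/M_{B\cup C,B\cup C}$ together with that of $M/M_{C,C}$ forces the image and kernel inclusions $\mathrm{im}\,N_{B,A}\subseteq\mathrm{im}\,N_{B,B}$ and $\mathrm{ker}\,N_{B,B}\subseteq\mathrm{ker}\,N_{A,B}$ needed for $(M/M_{C,C})/(M/M_{C,C})_{B,B}$, so that the common generalized inverse $(N_{B,B})^{-}$ may indeed be used consistently on both sides.
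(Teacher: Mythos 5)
Your proposal is correct and takes essentially the same approach as the paper's own proof: both insert the generalized Banachiewicz inverse of $M_{B\cup C,B\cup C}$ from Lemma \ref{lem:gen inverse formula} (pivoting on $M_{C,C}$) into $M/M_{B\cup C,B\cup C}$, observe that the resulting six-term expansion coincides identically with that of $(M/M_{C,C})/(M/M_{C,C})_{B,B}$, and obtain the second equality by interchanging $B$ and $C$. The image/kernel bookkeeping you flag as the delicate step is precisely what the paper delegates to Lemma \ref{new} in the Appendix.
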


For the case of matrices over a field where the inverses exist, the first equality in (\ref{Scrule}) is an instance of the Crabtree-Haynsworth quotient formula \cite{DVO81}.
The extension of the quotient formula to generalized inverses for matrices over a field was given by Carson, Haynsworth and Markham \cite{CHM} under some rank conditions, see Theorem 4.8 in the review by Ouellete \cite{DVO81}. However, since here we are dealing with matrices with Hilbert space operator entries rather than just matrices over a field, we need to extend this result accordingly.  To this end, below we independently prove a generalization of the algebraic content of the theorem to matrices with Hilbert space operator entries, modulo the conditions for these Schur complements to be well-defined which we defer to Lemma \ref{new} in the Appendix.

\begin{proof}
Assume that the conditions of Lemma \ref{new} are in place. Let us first compute $(M/M_{C,C})/(M/M_{C,C})_{B,B}$: 
\begin{eqnarray*}
M/M_{C,C} &=&\left[ 
\begin{array}{ccc}
M_{A,A} & M_{A,B} & M_{A,C} \\ 
M_{B,A} & M_{B,B} & M_{B,C} \\ 
M_{C,A} & M_{C,B} & M_{C,C}
\end{array}
\right] /M_{C,C} \\
&=&\left[ 
\begin{array}{cc}
M_{A,A} & M_{A,B} \\ 
M_{B,A} & M_{B,B}
\end{array}
\right] -\left[ 
\begin{array}{c}
M_{A,C} \\ 
M_{B,C}
\end{array}
\right] (M_{C,C})^- 
\begin{array}{cc}
\lbrack M_{C,A} & M_{C,B}]
\end{array}
\\
&=&\left[ M_{L,R}-M_{L,C}(M_{C,C})^-M_{C,R}\right] _{R,L\in \left\{
A,B\right\} }
\end{eqnarray*}
so a second Schur complementation leads to 
\begin{multline*}
(M/M_{C,C})/(M/M_{C,C})_{B,B}=M_{A,A}-M_{A,C}(M_{C,C})^-M_{C,A} \\
-(M_{A,B}-M_{A,C}(M_{C,C})^-M_{C,B})\Xi (M_{B,A}-M_{B,C}(M_{C,C})^-M_{C,A}),
\end{multline*}
where we write $\Xi =(M_{B,B}-M_{B,C}(M_{C,C})^-M_{C,B})^-$ for shorthand.
We then compute $M/M_{B\cup C,B\cup C}$  
\begin{multline*}
M/M_{B\cup C,B\cup C}=\left[ 
\begin{array}{ccc}
M_{A,A} & M_{A,B} & M_{A,C} \\ 
M_{B,A} & M_{B,B} & M_{B,C} \\ 
M_{C,A} & M_{C,B} & M_{C,C}
\end{array}
\right] /\left[ 
\begin{array}{cc}
M_{B,B} & M_{B,C} \\ 
M_{C,B} & M_{C,C}
\end{array}
\right] \\
=M_{AA}-\left[ 
\begin{array}{cc}
M_{A,B} & M_{A,C}
\end{array}
\right] \left[ 
\begin{array}{cc}
M_{B,B} & M_{B,C} \\ 
M_{C,B} & M_{C,C}
\end{array}
\right] ^-\left[ 
\begin{array}{c}
M_{B,A} \\ 
M_{C,A}
\end{array}
\right] ,
\end{multline*}
however, the inverse can be computed explicitly using the Banachiewicz
formula as 
\[
\left[ 
\begin{array}{cc}
\Xi & -\Xi M_{B,C}(M_{C,C})^- \\ 
-(M_{C,C})^-M_{C,B}\Xi & (M_{C,C})^-+(M_{C,C})^{-}M_{C,B}\Xi
M_{B,C}(M_{C,C})^-
\end{array}
\right] .
\]
Multiplying out the block matrix readily leads to the same expression
already obtained for $(M/M_{C,C})/(M/M_{C,C})_{B,B}$. The second equality in
(\ref{Scrule}) follows from Lemma \ref{new} and by interchanging $B$ and $C$.
\end{proof}

\subsection{Instantaneous Feedback Limit as Schur Complement}

Suppose that we are given a collection of components which have separate descriptions 
$( S_{j},L_{j},K_{j}) $ for $j=1,2,\cdots, c$. We may collect them into a single model $( S,L,K) $ given by 
\[
S=\left[ 
\begin{array}{cccc}
S_{1} & 0 & \cdots & 0 \\ 
0 & S_{2} & \cdots & 0 \\ 
\vdots & \vdots & \ddots & 0 \\ 
0 & 0 & 0 & S_{c}
\end{array}
\right] ,\quad L=\left[ 
\begin{array}{c}
L_{1} \\ 
L_{2} \\ 
\vdots \\ 
L_{c}
\end{array}
\right] ,\quad K=\sum_{j=1}^{c}K_{j}.
\]
This just describes the open-loop system where no connections have been made
and all input and output fields are therefore external.

To obtain the closed loop description we have to give a list of which
outputs are to be fed back in as inputs. Algebraically, this comes down to
assembling a total multiplicity space $\mathfrak{K}=\oplus _{j=1}^{c}%
\mathfrak{K}_{j} $ and a joint system space $\mathfrak{h}=\otimes _{j=1}^{c}%
\mathfrak{h}_{j}$. In this way we obtain a network matrix $\mathbf{G}$ on $%
\mathfrak{h}\otimes (\mathbb{C}\oplus \mathfrak{K})$ from the component
matrices $\mathbf{G}_{j}$ on $\mathfrak{h}_{j}\otimes (\mathbb{C}\oplus \mathfrak{K}%
_{j})$.

Once the connections have been specified, we arrive at a decomposition 
\[
\mathfrak{K}=\mathfrak{K}_{\rm e}\oplus \mathfrak{K}_{\rm i}
\]
where $\mathfrak{K}_{\rm e}$ lists all the external fields and $\mathfrak{K}_{\rm i}$
lists all the internal fields. This decomposition induces a decomposition of 
$\mathfrak{H}$ as 
\[
\mathfrak{h}\otimes (\mathbb{C}\oplus \mathfrak{K})=\left[ \mathfrak{h}%
\otimes (\mathbb{C}\oplus \mathfrak{K}_{\rm e})\right] \oplus \left[ \mathfrak{h}%
\otimes \mathfrak{K}_{\rm i}\right]
\]
and with respect to this decomposition, the It\={o} matrix may be
partitioned based on internal (`${\rm i}$') and external (`${\rm e}$') components as (see Gough and James \cite{GoughJamesCMP09} for details)
\begin{equation}
\mathbf{G}=\left[ 
\begin{array}{cc}
G_{\rm ee} & G_{\rm ei} \\ 
G_{\rm ie} & G_{\rm ii}
\end{array}
\right],  \label{G partitioned e and i}
\end{equation}
In Fig.~\ref{fig:externalinternal} we sketch the picture that emerges when
we subsume all the external fields together and all the internal fields
together as single channels.

\begin{figure}[htbp]
\centering
\includegraphics[width=0.40\textwidth]{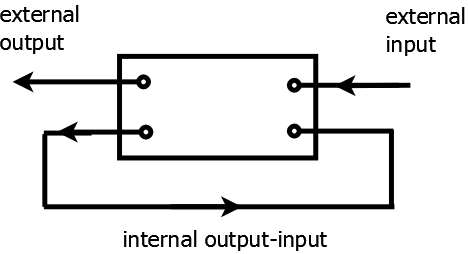}
\caption{Feedback situation}
\label{fig:externalinternal}
\end{figure}

In the instantaneous feedback limit we find that the reduced model is
described by the It\={o} matrix $\mathcal{F} \mathbf{G}\in \mathcal{B}(\mathfrak{h}%
\otimes (\mathbb{C}\oplus \mathfrak{K}_{\rm e}))$ given by the Schur complement 
\begin{equation}
\mathcal{F} \mathbf{G}=G_{\rm ee}-G_{\rm ei}(G_{\rm ii})^{-1}G_{\rm ie},  \label{f-reduction}
\end{equation}
provided that $G_{\rm ii}$ exists. We remark that the original version of this formula \cite{GoughJamesCMP09} involved the related
\emph{model matrix} $\mathbf{U}=\left[ 
\begin{array}{cc}
K & -L^{\ast }S \\ 
L & S
\end{array}
\right] $ rather than $\mathbf{G}$ and the corresponding feedback reduction map was
the fractional linear transformation $\mathcal{F}%
\mathbf{U}=U_{\rm ee}+U_{\rm ei}(1-U_{\rm ii})^{-1}U_{\rm ie}$. In both cases, the condition that $%
G_{\rm ii}=S_{\rm ii}-I$ be invertible is necessary for the feedback network to be
well-posed.

\begin{remark}
\label{rem:strict_Hurwitz}
We note that models studied here all satisfy a Hurwitz stability condition, though not necessarily in the strict sense. In general, the feedback reduction need not preserve the strictly Hurwitz property, and we may obtain conditionally stable modes through interconnection.
\end{remark}

\subsection{Adiabatic Elimination as Schur Complement}
We now re-examine the adiabatic elimination of oscillators. For finite $k$
we consider the It\={o} matrix 
\[
\mathbf{G} \left( k\right) =\left[ 
\begin{array}{cc}
K\left( k\right) & -L\left( k\right) ^{\ast }S \\ 
L\left( k\right) & S-I
\end{array}
\right]
\]
where we write the scaled operators (\ref{scaled parameters}) as 
\begin{eqnarray*}
K\left( k\right) &=&\left[ I,ka^{\ast }\right] \left[ 
\begin{array}{cc}
R & X \\ 
Z & A
\end{array}
\right] \left[ 
\begin{array}{c}
I \\ 
ka
\end{array}
\right] , \\
L\left( k\right) &=&[I,ka^{\ast }]\left[ 
\begin{array}{cc}
G & C \\ 
0 & 0
\end{array}
\right] \left[ 
\begin{array}{c}
I \\ 
ka
\end{array}
\right] , \\
S &=&[I,ka^{\ast }]\left[ 
\begin{array}{cc}
S & 0 \\ 
0 & 0
\end{array}
\right] \left[ 
\begin{array}{c}
I \\ 
ka
\end{array}
\right] .
\end{eqnarray*}
Recalling Remark \ref{rm:osc-notation} , it is now convenient to use the decomposition $\mathfrak{h}=%
\mathfrak{\hat{h}}\oplus \mathfrak{h}_{f}$ (here $\mathfrak{h}_{f}$ denotes the subspace of the fast oscillator modes) to write 
\[
\mathfrak{h}\otimes (\mathbb{C}\oplus \mathfrak{K})=\left[ \mathfrak{\hat{h}}%
\otimes (\mathbb{C}\oplus \mathfrak{K})\right] \oplus \left[ \mathfrak{h}_{f}\otimes (\mathbb{C}\oplus \mathfrak{K})\right]
\]
and with respect to this decomposition we may now write 
\begin{eqnarray}
\mathbf{G} \left( k\right) &=& k^2 a^* g_{ff} a + k a^* g_{fs} + k g_{sf}a + g_{ss} \notag \\
&\stackrel{\Delta}{=} & [I,ka^{\ast }]\left[ 
\begin{array}{cc}
g_{ss} & g_{sf} \\ 
g_{fs} & g_{ff}
\end{array}
\right] \left[ 
\begin{array}{c}
I \\ 
ka
\end{array}
\right]  \label{G(k) decomp in a}
\end{eqnarray}
and we set
\begin{equation}
g=\left[ 
\begin{array}{cc}
g_{ss} & g_{sf} \\ 
g_{fs} & g_{ff}
\end{array}
\right]. \label{g}
\end{equation}
It is easy to see
that $g$\ is given by 
\begin{eqnarray*}
g_{ss} &=&\left[ 
\begin{array}{cc}
R & -G^{\ast }S \\ 
G & S-I
\end{array}
\right] ,g_{sf}=\left[ 
\begin{array}{cc}
X & 0 \\ 
C & 0
\end{array}
\right] , \\
g_{fs} &=&\left[ 
\begin{array}{cc}
Z & -C^{\ast }S \\ 
0 & 0
\end{array}
\right] ,g_{ff}=\left[ 
\begin{array}{cc}
A & 0 \\ 
0 & 0
\end{array}
\right] .
\end{eqnarray*}
The It\={o} matrix corresponding to the limit operators $(\hat{S},\hat{%
L},\hat{K})$ in (\ref{Limit parameters}) is then 
\[
\hat{\mathbf{G}}=\left[ 
\begin{array}{cc}
\hat{K} & -\hat{L}^{\ast }\hat{S} \\ 
\hat{L} & \hat{S}
\end{array}
\right] =\left[ 
\begin{array}{cc}
R-XA^{-1}Z & -G^{\ast }S+XA^{-1}C^{\ast }S \\ 
G-CA^{-1}Z & S+CA^{-1}C^{\ast }S
\end{array}
\right]
\]
where we use the identity $-\hat{L}^{\ast }\hat{S}=-G^{\ast
}S+XA^{-1}C^{\ast }S$ in the upper right corner which relies on the trick (%
\ref{trick}) along with the identities (\ref{identities}). We then observe
that 
\[
\hat{\mathbf{G}}\equiv g_{ss}-g_{sf}\left( g_{ff}\right) ^-g_{fs}=g/g_{ff}
\]
which is the generalized Schur complement based on the Moore-Penrose inverse 
\[
\left[ 
\begin{array}{cc}
A & 0 \\ 
0 & 0
\end{array}
\right] ^-=\left[ 
\begin{array}{cc}
A^{-1} & 0 \\ 
0 & 0
\end{array}
\right] .
\]
Indeed, given the specific form here we see from the remarks after the
Definition \ref{def:Schur} that any generalized inverse may be used here.
We may then define the adiabatic elimination operator as $\mathcal{A}:\mathbf{G}\left(
k\right) \mapsto \hat{\mathbf{G}}=g/g_{ff}$.

\section{Commutativity of the Limits in General Networks}
\label{sec:AE-IF-commutativity}

Our first step is to see how the instantaneous feedback limit sits with the
adiabatic limit starting from a general model with fast oscillators and
internal connections which we wish to eliminate.

We have seen from (\ref{G(k) decomp in a}) that the It\={o} matrix $\mathbf{G}\left(
k\right) $ may be written as $\mathbf{G}\left( k\right) =[I,ka^{\ast }]g\left[ 
\begin{array}{c}
I \\ 
ka
\end{array}
\right] $ with $g$ given by (\ref{g}). Suppose that the input fields can be partitioned into
internal and external fields that corresponds to a partitioning of $S$ as
\[
S=\left[\begin{array}{cc} S_{\rm ee} & S_{\rm ei} \\ S_{\rm ie} & S_{\rm ii} \end{array} \right],
\]
where $S_{\rm ii}$ is a square matrix pertaining to the scattering of  the internal fields to themselves,  $S_{\rm ee}$ is a square matrix pertaining to the scattering of  the external fields to themselves, while $S_{\rm ei}$ and $S_{\rm ie}$ pertains to a scattering of internal fields to external fields, and vice-versa, respectively.  We also partition $C$ and $G$ accordingly as
$$
C=\left[\begin{array}{cc} C_{\rm e} \\ C_{\rm i} \end{array}\right]; G=\left[\begin{array}{cc} G_{\rm e} \\ G_{\rm i} \end{array}\right].
$$
If we wish to decompose this with
respect to the external and internal field labels, then we obtain 
\[
\mathbf{G} \left( k\right) \equiv \left[ 
\begin{array}{cc}
G_{\rm ee}\left( k\right) & G_{\rm ei}\left( k\right) \\ 
G_{\rm ie}\left( k\right) & G_{\rm ii}\left( k\right)
\end{array}
\right]
\]
similar to (\ref{G partitioned e and i}). We note that these blocks will
necessarily have the following structure 
\begin{eqnarray*}
G_{\rm ee}\left( k\right) &\equiv &[I,ka^{\ast }]g_{\rm ee}\left[ 
\begin{array}{c}
I \\ 
ka
\end{array}
\right] , \\
G_{\rm ei}\left( k\right) &\equiv &[I,ka^{\ast }]g_{\rm ei}, \\
G_{\rm ie}\left( k\right) &\equiv &g_{\rm ee}\left[ 
\begin{array}{c}
I \\ 
ka
\end{array}
\right] , \\
G_{\rm ii}\left( k\right) &=&g_{\rm ii},
\end{eqnarray*}
with 
\begin{eqnarray*}
g_{\rm ee} &=\left[\begin{array}{cc} R_1 & M_1 \\ G_1 & S_{\rm ii}-I \end{array} \right];\,
g_{\rm ei} =\left[\begin{array}{cc} X_1 & 0 \\ C_{\rm i} & 0 \end{array} \right];\\
g_{\rm ie} &=\left[\begin{array}{cc} Z_1 & -C^*S_{\rm i} \\ 0 & 0 \end{array} \right];\, 
g_{\rm ii} =\left[\begin{array}{cc} A& 0 \\ 0 & 0 \end{array} \right],
\end{eqnarray*}
and 
\begin{align*}
S_{\rm e} &= \left[\begin{array}{c} S_{\rm ee} \\ S_{\rm ie} \end{array} \right],\;
S_{\rm i} = \left[\begin{array}{c} S_{\rm ei} \\ S_{\rm ii} \end{array} \right],\;
R_1 = \left[\begin{array}{cc} R  &  -G^*S_{\rm e}  \\ G_{\rm e} & S_{\rm ee}-I \end{array}\right],\;
M_1 =  \left[\begin{array}{c}  -G^*S_{\rm i} \\  S_{\rm ei} \end{array}\right],\\
G_1 &= \left[\begin{array}{cc} G_{\rm i} & S_{\rm ie}\end{array}\right],\; Z_1 = \left[\begin{array}{cc} Z & -C^*S_{\rm e}\end{array}\right],\; X_1 =\left[\begin{array}{c} X  \\ C_{\rm e}\end{array}\right].
\end{align*}
We therefore obtain the feedback reduction 
\[
\mathcal{F}\mathbf{G} \left( k\right) =G_{\rm ee}\left( k\right) \equiv \lbrack I,ka^{\ast
}]\left( g/g_{\rm ii}\right) \left[ 
\begin{array}{c}
I \\ 
ka
\end{array}
\right] .
\]

Conversely, the adiabatic elimination corresponds to 
\[
\mathcal{A}\mathbf{G}\left( k\right) =g/g_{ff}.
\]
In this way we see that the essential action is a Schur complementation of
the object $g$ either with respect to labels of the fast oscillators of the
system, or the labels of the internal fields. To this end, we can now establish the main technical result of this paper.

\begin{theorem}
\label{th:AE-IF-commute} Let $\mathbf{G}\left( k\right)$ and $\mathcal{F}\mathbf{G}\left( k\right)$ correspond to strictly Hurwitz stable open quantum systems (i.e., the $A$ matrix of each system is strictly Hurwitz stable), and suppose that $S_{\rm ii}+C_{\rm i}A^{-1}C^*S_{\rm i}-I$ and $S_{\rm ii}-I$ are invertible. Then in the notation established above we have 
\[
\mathcal{AF}\mathbf{G}\left( k\right) =\mathcal{FA}\mathbf{G} \left( k\right) .
\]
\end{theorem}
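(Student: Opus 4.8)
The plan is to reduce the theorem to the successive complementation rule (Lemma \ref{lem:scrule}) applied to the object $g$ defined in (\ref{g}). The key observation is that both limits have been recast as Schur complements of the \emph{same} operator $g$: the adiabatic elimination operator acts as $\mathcal{A}\mathbf{G}(k)=g/g_{ff}$, complementing with respect to the fast-oscillator labels $f$, while the feedback reduction acts as $\mathcal{F}\mathbf{G}(k)=[I,ka^\ast](g/g_{\rm ii})[I,ka]^T$, complementing with respect to the internal-field labels $\rm i$. Crucially, the fast-mode index set and the internal-field index set are disjoint subsets of the total index set. Thus if I label by $A$ the slow/external labels, by $B$ the internal-field labels, and by $C$ the fast-oscillator labels, the two compositions $\mathcal{AF}$ and $\mathcal{FA}$ correspond precisely to the two iterated Schur complements appearing on the right-hand side of (\ref{Scrule}), namely $(g/g_{\rm ii})/(g/g_{\rm ii})_{ff}$ and $(g/g_{ff})/(g/g_{ff})_{\rm ii}$, both of which equal the single joint complement $g/g_{B\cup C,B\cup C}$.

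First I would make the index bookkeeping explicit: write the total decomposition as the direct sum over slow-external, internal-field, and fast-oscillator labels, and verify that complementing $g$ against the fast block $g_{ff}$ (with Moore--Penrose inverse, as justified after Definition \ref{def:Schur}) yields exactly $\hat{\mathbf{G}}$, while the remaining internal/external structure is untouched. Second, I would confirm that, after each first complementation, the residual block indexed by the \emph{other} label set is the one against which we complement next; this is where one must check that $\mathcal{A}$ and $\mathcal{F}$ genuinely commute at the level of the abstract $g$ rather than at the level of $\mathbf{G}(k)$, which carries the extra conjugation by $[I,ka^\ast]$ and $[I,ka]^T$. Since that conjugation is applied uniformly and does not mix the fast-mode labels with the field labels, it passes through the complementation against the internal block unchanged, so the final $k$-dependence is transported identically in both orders.

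Third, and most importantly, I would invoke Lemma \ref{lem:scrule} with the identification above: the first equality in (\ref{Scrule}) gives $g/g_{B\cup C,B\cup C}=(g/g_{C,C})/(g/g_{C,C})_{B,B}$, which is $\mathcal{AF}\mathbf{G}(k)$ after restoring the conjugation, and the second equality gives $(g/g_{B,B})/(g/g_{B,B})_{C,C}$, which is $\mathcal{FA}\mathbf{G}(k)$. Equating them yields the claim.

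The main obstacle will be verifying that the generalized Schur complements invoked at each stage are \emph{well-defined}, i.e. that the image/kernel inclusions demanded by Lemma \ref{lm:GS-exist} hold at every step, not merely for the final joint complement. Concretely, the hypotheses that $A$ be strictly Hurwitz and that both $S_{\rm ii}-I$ and $S_{\rm ii}+C_{\rm i}A^{-1}C^\ast S_{\rm i}-I$ be invertible are exactly what guarantee invertibility of the relevant fast block $g_{ff}$ (via $A$) and of the internal block both before and after adiabatic elimination (the second matrix being the internal scattering block $\widehat{S}_{\rm ii}-I$ of the adiabatically reduced model). I would therefore need to trace how the strict-Hurwitz condition and the two invertibility assumptions feed into the well-definedness conditions of Lemma \ref{new}, so that the successive complementation rule applies in both orders; once that is secured, the algebraic identity (\ref{Scrule}) delivers the commutativity immediately.
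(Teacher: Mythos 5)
Your strategy is the paper's own: recast both $\mathcal{A}$ and $\mathcal{F}$ as generalized Schur complements of the single object $g$, invoke the successive complementation rule (Lemma \ref{lem:scrule}), and feed the theorem's hypotheses into the well-definedness conditions of Lemma \ref{new}. Your reading of the hypothesis that $S_{\rm ii}+C_{\rm i}A^{-1}C^{\ast}S_{\rm i}-I$ be invertible as invertibility of $\hat{S}_{\rm ii}-I$ (well-posedness of the feedback loop \emph{after} elimination) is also correct, and that is exactly the role this hypothesis plays in the paper's verification of conditions (\ref{eq:condition1}) and (\ref{eq:condition2}).

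There is, however, one genuine flaw in the step that carries the whole argument. You claim that ``the fast-mode index set and the internal-field index set are disjoint subsets of the total index set,'' and you then apply Lemma \ref{lem:scrule} with $B$ the internal labels and $C$ the fast labels. They are not disjoint: the index set of $g$ is the product $\{\mathrm{slow},\mathrm{fast}\}\times\{\mathrm{external},\mathrm{internal}\}$, so the fast set $\{3,4\}$ and the internal set $\{2,4\}$ share the fast-internal label $4$; your $(A,B,C)$ is therefore not a partition of $\mathfrak{I}$, and the lemma cannot be invoked as you state it. The paper's repair is to observe that the fast-internal row and column of $g$ vanish identically (the $k^{2}$ terms occur only in $K(k)$, which sits in the external block), so that the feedback reduction $\mathcal{F}$ coincides with complementation against the \emph{slow}-internal block alone; Lemma \ref{lem:scrule} is then applied to the genuine partition $A=\{1\}$ (slow external), $B=\{2\}$ (slow internal), $C=\{3,4\}$ (fast), and the two iterated complements reproduce the two orders of taking the limits. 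Once this bookkeeping is fixed, your argument coincides with the paper's, up to the verification of conditions (\ref{eq:condition1})--(\ref{eq:condition6}), which you correctly flag as the main obstacle but defer: that verification (kernel and image computations using invertibility of $A$, of $S_{\rm ii}-I$, and of $S_{\rm ii}+C_{\rm i}A^{-1}C^{\ast}S_{\rm i}-I$, together with the Banachiewicz formula for the block with rows $(S_{\rm ii}-I,\ C_{\rm i})$ and $(-C^{\ast}S_{\rm i},\ A)$) constitutes essentially all of the paper's written proof.
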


The proof of the above theorem is given in the Appendix. Thus we establish that  that  if $\mathbf{G}\left( k\right)$ and $\mathcal{F}\mathbf{G}\left( k\right)$ are strictly Hurwitz stable systems, and $S_{\rm ii}+C_{\rm i}A^{-1}C^*S_{\rm i}-I$ and $S_{\rm ii}-I$ are invertible, the operation of adiabatic elimination of the oscillators in the network indeed commutes with the operation of taking the instantaneous feedback limit. For the systems in series example of section \ref{sec:sys-series} it can be seen that the strictly Hurwitz stable property holds when $A_1$ and $A_2$ are strictly Hurwitz stable, while for the beam splitter with an in-loop device of section \ref{sec:sys-in-loop} it holds when $|\alpha|<1$.

The requirement that $\mathbf{G}\left( k\right)$ and $\mathcal{F}\mathbf{G}\left( k\right)$ be strictly Hurwitz is due to Remark \ref{rem:strict_Hurwitz}. Note that the strict Hurwitz condition is not however necessary and that the limits may more generally commute whenever the kernel property of $Y$ in Theorem \ref{thm: adiabatic elimination} holds. 
\section{Conclusion}
In this paper we have studied the question of commutativity of adiabatic elimination of oscillatory components and the operation of taking the instantaneous feedback limit in a quantum network with Markovian components. Provided some mild conditions are satisfied, we answer the question in the affirmative by showing that adiabatic elimination can be viewed as a Schur complementation operation, thus putting it on the same footing as the instantaneous feedback limit, and subsequently proving the commutativity of successive Schur complementation. This result is important from a practical point of view because in practice it is much easier to obtain a simplified description of a quantum network by first obtaining simplified component models and then using them to obtain a description of the network rather than the converse operation of first forming the (possibly large) network and applying adiabatic elimination at the network level. Since we have shown that the order in which adiabatic elimination and the instantaneous feedback limit is taken is inconsequential, this  justifies  employing the former order of operations which is free of any concerns regarding the uniqueness of the resulting simplified network model in which the fast oscillatory components have been eliminated.

\section{Appendix}

\subsection{Proof of Theorem \ref{thm: adiabatic elimination}}

Let us set 
\[
M_{N}=\mathfrak{\hat{h}}\otimes \mathrm{span}\left\{ |\mathbf{n}\rangle :\sum
n_{j}=N\right\}, 
\]
for $N=0,1,2,\ldots$. In particular, we have the direct sum of orthogonal subspaces $\mathfrak{\hat{h}}%
\otimes \mathfrak{h}_{\mathrm{osc}}=\oplus _{N\geq 0}M_{N}$. Let $P_{s}$ be
orthogonal projection onto the ``slow subspace'' $M_0=\mathfrak{h}_s=\mathfrak{\hat{h}}\otimes 
\mathbb{C}|0\rangle _{\mathrm{osc}}$ and let $P_{f}=I-P_{s}.$ Recall the hypothesis
that ${\rm ker}(Y)=\mathfrak{h}_s$. We first have the following:

\begin{lemma}
Under the hypothesis ${\rm ker}(Y)=\mathfrak{h}_s$, the subspaces $M_{N}$ are stable under $Y_{N}=Y|_{M_{N}}$, and we
have 
\[
\left( P_{f}YP_{f}\right) ^{-1}=\oplus _{N\geq 1}Y_{N}^{-1}.
\]
Moreover, let $|\delta _{j}\rangle $ be the state where the $j$-th mode is in
the first excited state and all others are in the vacuum, then $\left(
Y_{1}\right) ^{-1}\sum_{j}\phi _{j}\otimes |\delta _{j}\rangle
=\sum_{jl}\left( A^{-1}\right) _{jl}\phi _{l}\otimes |\delta _{j}\rangle $.
\end{lemma}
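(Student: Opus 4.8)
The plan is to reduce everything to a sector-by-sector analysis exploiting that $Y$ conserves the total oscillator excitation number. First I would establish the stability assertion: the operator $a_{j}^{\ast }a_{l}$ removes one quantum from mode $l$ and creates one in mode $j$, so it leaves $N=\sum_{k}n_{k}$ unchanged and maps $\mathrm{span}\{|\mathbf{n}\rangle :\sum n_{k}=N\}$ into itself; since each $A_{jl}$ acts only on the spectator factor $\hat{\mathfrak{h}}$, the operator $Y$ preserves every $M_{N}$, and hence $Y=\oplus _{N\geq 0}Y_{N}$ is block diagonal with respect to $\hat{\mathfrak{h}}\otimes \mathfrak{h}_{\mathrm{osc}}=\oplus _{N}M_{N}$. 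In particular $P_{f}YP_{f}=\oplus _{N\geq 1}Y_{N}$ on the range of $P_{f}$, namely $\oplus _{N\geq 1}M_{N}$, and $Y_{0}=Y|_{M_{0}}=0$ because $a_{l}$ annihilates the vacuum.

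Next I would prove the explicit formula for $Y_{1}^{-1}$ by direct computation on $M_{1}$. Since $a_{l}|\delta _{l'}\rangle =\delta _{ll'}|0\rangle $ and $a_{j}^{\ast }|0\rangle =|\delta _{j}\rangle $, one has $a_{j}^{\ast }a_{l}|\delta _{l'}\rangle =\delta _{ll'}|\delta _{j}\rangle $, so $Y(\phi \otimes |\delta _{l'}\rangle )=\sum_{j}A_{jl'}\phi \otimes |\delta _{j}\rangle $. Under the identification $M_{1}\cong \hat{\mathfrak{h}}\otimes \mathbb{C}^{m}$ with $|\delta _{j}\rangle \leftrightarrow e_{j}$, this says precisely that $Y_{1}$ is the block matrix $A=[A_{jl}]$. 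Because $A$ is boundedly invertible by the standing assumption on the scaled parameters (\ref{scaled parameters}), $Y_{1}$ is invertible with $Y_{1}^{-1}=A^{-1}$, which on re-expanding in the $|\delta _{j}\rangle $ basis gives exactly the stated identity.

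Then I would address the direct-sum inversion formula. From block diagonality, $\ker Y=\oplus _{N\geq 0}\ker Y_{N}$, so the hypothesis $\ker Y=\mathfrak{h}_{s}=M_{0}$ is equivalent to $Y_{N}$ being injective for every $N\geq 1$. To analyse the higher sectors I would identify $a_{j}^{\ast }a_{l}=d\Gamma (|e_{j}\rangle \langle e_{l}|)$, so that $Y=\sum_{jl}A_{jl}\otimes d\Gamma (|e_{j}\rangle \langle e_{l}|)$ is the second quantisation of $A\in \mathcal{B}(\hat{\mathfrak{h}}\otimes \mathbb{C}^{m})$ with $\hat{\mathfrak{h}}$ a spectator, giving on the $N$-quantum sector $Y_{N}=(\sum_{p=1}^{N}A^{(p)})|_{\mathrm{Sym}^{N}(\mathbb{C}^{m})}$, where $A^{(p)}$ is the copy of $A$ acting on $\hat{\mathfrak{h}}$ and the $p$-th oscillator factor. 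Once each $Y_{N}$ ($N\geq 1$) is known to be invertible, the block-diagonal structure immediately yields $(P_{f}YP_{f})^{-1}=\oplus _{N\geq 1}Y_{N}^{-1}$, which is the content of the remaining assertion.

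The main obstacle is exactly the passage from injectivity of $Y_{N}$ to its bounded invertibility, with a bound uniform in $N$ (this uniformity is what makes $\oplus _{N\geq 1}Y_{N}^{-1}$ a genuine, rather than merely formal, inverse of $P_{f}YP_{f}$); in infinite dimensions injectivity alone does not suffice. My tool would be the dissipativity relation $A+A^{\ast }=-C^{\ast }C\leq 0$ from the identities (\ref{identities}): it lifts factorwise to $A^{(p)}+A^{(p)\ast }=-(C^{\ast }C)^{(p)}\leq 0$, whence $\mathrm{Re}\,Y_{N}=-\tfrac{1}{2}\sum_{p=1}^{N}(C^{\ast }C)^{(p)}\leq 0$, so each $Y_{N}$ is dissipative. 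In the strictly Hurwitz regime assumed for the main results, where $\mathrm{Re}\langle \phi ,A\phi \rangle \leq -\epsilon \Vert \phi \Vert ^{2}$ for some $\epsilon >0$, this strengthens to $\mathrm{Re}\,Y_{N}\leq -\epsilon I$, which delivers bounded invertibility together with the uniform estimate $\Vert Y_{N}^{-1}\Vert \leq 1/\epsilon $ and hence the boundedness of $(P_{f}YP_{f})^{-1}$; under the bare kernel hypothesis the injectivity controls only the algebraic kernels, and it is precisely here that the Hurwitz structure supplies the missing analytic control.
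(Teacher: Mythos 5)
Your three core steps coincide with the paper's own proof: the paper likewise obtains stability of each $M_N$ from the number-conserving form of $Y$, obtains the formula $(P_fYP_f)^{-1}=\oplus_{N\geq 1}Y_N^{-1}$ ``from the direct sum decomposition'', and verifies the $N=1$ identity by exactly your computation, i.e.\ by identifying $Y_1$ with the block matrix $A$ acting on $\hat{\mathfrak{h}}\otimes\mathbb{C}^m$. Where you genuinely depart from the paper is the invertibility of the sectors $Y_N$, $N\geq 1$: the paper disposes of this in one sentence (``stability and invertibility of $Y_N$ follows directly from the specific form of $Y_N$ and the fact that $Y$ has kernel space $M_0$''), i.e.\ it treats injectivity as sufficient, whereas you correctly observe that each $M_N$ is infinite dimensional whenever $\hat{\mathfrak{h}}$ is, so injectivity alone yields neither bounded invertibility nor a bound uniform in $N$. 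Your concern is not pedantry. Take $\hat{\mathfrak{h}}=L^2[0,1]$ (coordinate $x$), $m=2$, $\Omega=\mathrm{diag}(1,-1)$, $C=\sqrt{2x}\,I$, so $A=-\tfrac{1}{2}C^{\ast}C-i\Omega=\mathrm{diag}(-x-i,\,-x+i)$: then $A$ is bounded, boundedly invertible, and even strictly Hurwitz in the paper's pointwise sense; one checks $\ker Y=M_0$, yet on $\hat{\mathfrak{h}}\otimes|1,1\rangle$ the operator $Y_2$ acts as multiplication by $-2x$, which has no bounded inverse. So under the lemma's stated hypothesis, $(P_fYP_f)^{-1}$ need not exist as a bounded operator at all, and the paper's one-line argument glosses over precisely the point you isolate.

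Two caveats on your own fix, however. First, the coercive inequality $\mathrm{Re}\,\langle\phi,A\phi\rangle\leq-\epsilon\|\phi\|^2$ that you invoke is strictly stronger than the paper's definition of strictly Hurwitz stable ($\mathrm{Re}\,\langle\phi,A\phi\rangle<0$ for all $\phi\neq 0$); the example above satisfies the latter but not the former, so the uniform spectral gap must be stated as an additional hypothesis rather than attributed to the paper's standing assumptions. (Granted, when $\hat{\mathfrak{h}}$ is finite dimensional the two notions coincide and your argument, with $\|Y_N^{-1}\|\leq 1/(N\epsilon)$, closes everything.) Second, as you yourself concede, under the lemma's literal hypothesis $\ker Y=\mathfrak{h}_s$ your argument does not establish the inversion formula in the bounded-operator sense --- but in view of the counterexample no argument can; the inverse can then only be understood sector-wise as a densely defined, possibly unbounded operator, and the same qualification silently propagates to the Moore--Penrose inverse $\tilde{\alpha}_2$ used later in the proof of Theorem \ref{thm: adiabatic elimination}. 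Judged against the paper, then, your proposal reproduces the paper's argument wherever the paper actually gives one, and is more careful exactly where the paper is not; the residual ``gap'' you flag is a defect of the lemma's statement rather than of your reasoning.
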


\begin{proof}
Stability and invertibility of $Y_{N}$ on $M_{N}$ follows directly from the
specific form of $Y_{N}$ and the fact that $Y$ has kernel space $M_{0}$. The
relation $\left( P_{f}YP_{f}\right) ^{-1}=\oplus _{N\geq 1}Y_{N}^{-1}$
follows from the direct sum decomposition.

The remaining identity is easily checked from $Y\sum_{j}\phi _{j}\otimes
|\delta _{j}\rangle =\sum_{jl}A_{jl}\phi _{l}\otimes |\delta _{j}\rangle $
and setting this equal to $\sum_{j}\tilde{\phi}_{j}\otimes |\delta
_{j}\rangle $ we deduce that $\phi _{l}=\left( A^{-1}\right) _{lj}\tilde{\phi%
}_{j}$.
\end{proof}

\begin{corollary}
${\rm ker}(Y^*)=M_0$. 
\end{corollary}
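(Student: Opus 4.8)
The plan is to exploit the fact that $Y^\ast$ inherits the number-preserving block structure of $Y$, so that the kernel computation reduces to the same fixed-excitation-number analysis already carried out for $Y$ in the preceding Lemma.

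First I would record that
$Y^\ast = \sum_{jl} A_{jl}^\ast \otimes a_l^\ast a_j$,
which upon relabelling $j \leftrightarrow l$ equals $\sum_{jl} (A^\ast)_{jl}\otimes a_j^\ast a_l$; in particular $Y^\ast$ has exactly the same form as $Y$ with $A$ replaced by $A^\ast$. Since each operator $a_j^\ast a_l$ preserves the total excitation number, both $Y$ and $Y^\ast$ leave every subspace $M_N$ invariant, so each $M_N$ is a reducing subspace. Consequently the restriction of the adjoint coincides with the adjoint of the restriction, $Y^\ast|_{M_N} = (Y_N)^\ast$, and because $Y^\ast$ respects the orthogonal decomposition $\hat{\mathfrak h}\otimes\mathfrak h_{\mathrm{osc}} = \bigoplus_{N\geq 0} M_N$, a vector lies in $\ker(Y^\ast)$ if and only if each of its components does. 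This gives $\ker(Y^\ast) = \bigoplus_{N\geq 0}\ker\big((Y_N)^\ast\big)$.

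Next I would invoke the preceding Lemma, which asserts that for $N\geq 1$ the block $Y_N$ is boundedly invertible on $M_N$. Since bounded invertibility of an operator passes to its adjoint (with $(Y_N^{-1})^\ast = ((Y_N)^\ast)^{-1}$), the operator $(Y_N)^\ast$ is invertible as well, so $\ker\big((Y_N)^\ast\big) = \{0\}$ for every $N\geq 1$. For $N=0$ one has $Y|_{M_0}=0$ and likewise $Y^\ast|_{M_0}=0$, since the right-hand annihilators kill the vacuum; hence $\ker\big((Y_0)^\ast\big) = M_0$. Substituting these into the direct-sum formula yields $\ker(Y^\ast) = M_0$, as claimed.

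The only point requiring care is the identification $Y^\ast|_{M_N} = (Y_N)^\ast$ together with the passage from invertibility of $Y_N$ to that of its adjoint; both rest on $M_N$ being a reducing subspace for $Y$ and on the $A_{jl}$ being bounded, so that each $Y_N$ is a bounded, boundedly invertible operator for $N\geq 1$. Given these facts, no further computation is needed.
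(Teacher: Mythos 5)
Your proof is correct, but it takes a genuinely different route from the paper's. The paper argues by duality/orthogonality: since $\mathrm{im}\,Y\subseteq\mathfrak{h}_f$ (stability) and $YP_s=0$, one gets $\langle\phi,Y\psi\rangle=\langle\phi,YP_f\psi\rangle=0$ for $\phi\in M_0$, hence $M_0\subseteq\ker(Y^*)$; the reverse inclusion is then obtained by contradiction, since a nonzero $\varphi\in\mathfrak{h}_f\cap\ker(Y^*)$ would be orthogonal to $Y(\mathfrak{h}_f)=\mathfrak{h}_f$ (surjectivity of $Y|_{\mathfrak{h}_f}$ from the preceding Lemma), forcing $\varphi\in\mathfrak{h}_s$. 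You instead exploit the number-preserving structure: $Y^*=\sum_{jl}(A^*)_{jl}\otimes a_j^*a_l$ has the same form as $Y$, each $M_N$ is a reducing subspace, so $\ker(Y^*)=\bigoplus_{N\geq 0}\ker\bigl((Y_N)^*\bigr)$, and bounded invertibility of $Y_N$ for $N\geq 1$ passes to $(Y_N)^*$, while the $N=0$ block vanishes. Both arguments rest on exactly the same input, namely the Lemma's invertibility of $Y_N$ on $M_N$ for $N\geq 1$, so neither is more economical in its hypotheses; but your version has two small advantages: it is direct rather than by contradiction, and it treats an arbitrary element of $\ker(Y^*)$ automatically, whereas the paper's contradiction step only addresses $\varphi$ lying entirely in $\mathfrak{h}_f$ and strictly speaking needs the supplementary decomposition $\varphi=\varphi_s+\varphi_f$ with $\varphi_s\in\ker(Y^*)$ to finish. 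Your explicit identification of the form of $Y^*$ also makes transparent the remark, used later in the paper's proof of Theorem \ref{thm: adiabatic elimination}, that ``$Y^*$ has the same form and properties as $Y$.'' What the paper's route buys in exchange is that it never needs the identification $Y^*|_{M_N}=(Y_N)^*$, only invariance of $\mathfrak{h}_f$ under $Y$; both proofs share the same (acceptable, given the paper's conventions) level of informality about unbounded-operator domains.
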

\begin{proof}
By the preceding lemma we have that $\mathfrak{h}_{\rm f}=P_f \hat{\mathfrak{h}} \otimes  \mathfrak{h}_{\rm osc}$ is stable under $Y$. Therefore, for any $\phi \in M_0$ and $\psi \in \hat{\mathfrak{h}} \otimes  \mathfrak{h}_{\rm osc}$ we have
that $\langle \phi, Y\psi \rangle =\langle \phi, YP_f \psi \rangle =0$. It follows that $\langle Y^*\phi, \psi \rangle = \langle \phi, Y \psi \rangle=0$ for all $\psi \in \hat{\mathfrak{h}} \otimes  \mathfrak{h}_{\rm osc}$, thus $Y^* \phi=0$ for any $\phi \in M_0$ and we conclude that $ M_0 \subseteq {\rm ker}(Y^*)$. We now need to show the converse that ${\rm ker}(Y^*) \subseteq M_0$ and we will do this by contradiction. To do this end, suppose that $\exists\varphi \in P_f \hat{\mathfrak{h}} \otimes  \mathfrak{h}_{\rm osc}$ with $\varphi \neq 0$ such that $\langle Y^*\varphi, \psi \rangle =0$ for all $\psi \in \hat{\mathfrak{h}} \otimes  \mathfrak{h}_{\rm osc}$. It follows that $\langle \varphi, Y\psi \rangle =0$ and therefore $\langle \varphi, YP_f \psi \rangle =0$ for all $\psi \in \hat{\mathfrak{h}} \otimes  \mathfrak{h}_{\rm osc}$. But since $\mathfrak{h}_{\rm f}$ is stable under $Y$ and $Y|_{\mathfrak{h}_f}$ is invertible, it follows that $\varphi \in \mathfrak{h}_{\rm s}$. But this contradicts the hypothesis that $\varphi$ is a non-zero element of $\mathfrak{h}_f$ and therefore we conclude that ${\rm ker}(Y^*) \subseteq M_0$. This concludes the proof.
\end{proof}

We now state a sufficient condition for ${\rm ker}(Y)=M_0={\rm ker}(Y^*)$. Let us first recall the following definition
\begin{definition}
A bounded Hilbert space operator $A$ is strictly Hurwitz stable if 
\[
{\rm Re}\langle \psi |A\psi \rangle <0\text{, \ for all }\psi \neq 0.
\]
\end{definition}

\begin{lemma}
\label{lem:Hurwitz} Let $A_{jl}\in \mathcal{B}(\mathfrak{\hat{h}})$ such
that $A=\left[ A_{jl}\right] \in \mathcal{B}(\mathfrak{\hat{h}}\otimes 
\mathbb{C}^{m})$ is strictly Hurwitz stable. Then the operator 
\begin{equation}
Y=\sum_{jl}A_{jl}\otimes a_{j}^{\ast }a_{l}  \label{eq:Y}
\end{equation}
on $\mathfrak{\hat{h}}\otimes \mathfrak{h}_{\text{osc}}$ has kernel consisting of
vectors of the form $\phi \otimes |0\rangle _{\mathrm{osc}}$, where $\phi \in 
\mathfrak{\hat{h}}$.
\end{lemma}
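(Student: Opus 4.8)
The plan is to exploit the total-excitation-number grading that is already in place. Since each $a_{j}^{\ast }a_{l}$ preserves the number of oscillator quanta, $Y$ commutes with the total number operator and therefore leaves each $M_{N}$ invariant, so that $\mathfrak{\hat h}\otimes \mathfrak{h}_{\mathrm{osc}}=\bigoplus_{N\ge 0}M_{N}$ reduces $Y$ blockwise. On $M_{0}$ every $a_{l}$ annihilates the vacuum, whence $Y|_{M_{0}}=0$ and $M_{0}\subseteq {\rm ker}(Y)$; the real content of the lemma is the reverse inclusion. I would obtain it by showing that $Y$ is \emph{strictly dissipative off} $M_{0}$, i.e. ${\rm Re}\langle \Psi ,Y\Psi \rangle <0$ for every nonzero $\Psi$ in the orthogonal complement of $M_{0}$, which forces $Y\Psi \neq 0$.

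The key computation converts the strict Hurwitz hypothesis into this strict dissipativity. Set $B=A+A^{\ast }$; the hypothesis gives $\langle \phi ,B\phi \rangle =2\,{\rm Re}\langle \phi ,A\phi \rangle <0$ for all $0\neq \phi \in \mathfrak{\hat h}\otimes \mathbb{C}^{m}$, so $B$ is self-adjoint, nonpositive, and has ${\rm ker}(B)=\{0\}$. Using $(a_{j}^{\ast }a_{l})^{\ast }=a_{l}^{\ast }a_{j}$ one finds $Y+Y^{\ast }=\sum_{jl}B_{jl}\otimes a_{j}^{\ast }a_{l}$, and moving the annihilators onto the bra yields, for any $\Psi$, the identity $\langle \Psi ,(Y+Y^{\ast })\Psi \rangle =\sum_{jl}\langle (I\otimes a_{j})\Psi ,\,(B_{jl}\otimes I)(I\otimes a_{l})\Psi \rangle$. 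Assembling the column $w=\bigoplus_{l}(I\otimes a_{l})\Psi $, regarded as an element of $(\mathfrak{\hat h}\otimes \mathbb{C}^{m})\otimes \mathfrak{h}_{\mathrm{osc}}$ (after reordering tensor factors), this reads $2\,{\rm Re}\langle \Psi ,Y\Psi \rangle =\langle w ,(B\otimes I_{\mathrm{osc}})w \rangle \le 0$.

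It remains to analyse the equality case. If $Y\Psi =0$ then $\langle w ,(B\otimes I)w \rangle =0$, and writing $-B\ge 0$ with its square root gives $\langle w ,-(B\otimes I)w \rangle =\|((-B)^{1/2}\otimes I)w \|^{2}$, so vanishing forces $((-B)^{1/2}\otimes I)w =0$ and hence $(B\otimes I)w =0$. Since ${\rm ker}(B\otimes I_{\mathrm{osc}})={\rm ker}(B)\otimes \mathfrak{h}_{\mathrm{osc}}=\{0\}$, we get $w =0$, i.e. $(I\otimes a_{l})\Psi =0$ for every $l$; the joint kernel of all the annihilators is precisely $M_{0}=\mathfrak{\hat h}\otimes \mathbb{C}|0\rangle _{\mathrm{osc}}$, so $\Psi \in M_{0}$. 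Because $Y$ respects the grading, it suffices to run this argument on each sector $M_{N}$ separately, on which every $a_{j}^{\ast }a_{l}$ acts boundedly; reassembling through the orthogonal direct sum then gives ${\rm ker}(Y)=M_{0}$, as claimed.

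The step requiring the most care is the equality analysis when $\mathfrak{\hat h}$ is infinite-dimensional: strict negativity of $B$ yields only ${\rm ker}(B)=\{0\}$ and not a uniform spectral gap, so I cannot conclude by a lower bound $B\le -\varepsilon I$. The square-root argument is exactly what upgrades pointwise strict negativity to the conclusion $w =0$ without any gap. A secondary, bookkeeping-level obstacle is the unboundedness of $Y$ (the products $a_{j}^{\ast }a_{l}$ are unbounded on Fock space); I would dispose of it by carrying out all manipulations within a fixed finite-quantum-number sector $M_{N}$, where the operators involved are bounded, and only afterwards passing to the direct sum.
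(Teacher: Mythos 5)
Your proof is correct, and its core coincides with the paper's: both arguments reduce membership of $\ker Y$ to the vanishing of the (real part of the) quadratic form $\langle \Psi , Y\Psi \rangle$, rewritten in terms of the vectors $\psi_j=(I\otimes a_j)\Psi$, and both conclude from strict Hurwitz stability that every $\psi_j=0$, which is exactly the characterization of $M_0$. Where you genuinely diverge is in the equality-case analysis. The paper expands each $\psi_j$ in the oscillator number basis, $\psi_j=\sum_{\mathbf{n}}\psi_j(\mathbf{n})\otimes |\mathbf{n}\rangle$, so that $\langle \Psi , Y\Psi \rangle=\sum_{\mathbf{n}}\sum_{jl}\langle \psi_j(\mathbf{n}),A_{jl}\psi_l(\mathbf{n})\rangle$ is a sum of terms, each with strictly negative real part unless the vector $(\psi_j(\mathbf{n}))_j\in \mathfrak{\hat{h}}\otimes\mathbb{C}^m$ vanishes; vanishing of the total then kills every term separately, so the defining strict Hurwitz inequality is applied directly and no further operator theory is needed. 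You instead stay basis-free: you form $B=A+A^{\ast}$, assemble $w=\oplus_l (I\otimes a_l)\Psi$, and handle equality via the square root of $-B$ together with injectivity of $B\otimes I$ for injective $B$. Both devices serve the same purpose --- upgrading pointwise strict negativity to $w=0$ in the absence of any spectral gap $B\le -\varepsilon I$ --- and both are sound; the paper's basis expansion is the more elementary route, while yours avoids the number basis entirely at the cost of invoking the (standard, easily verified) facts that $T\ge 0$ and $\langle w,Tw\rangle=0$ imply $Tw=0$, and that $\ker(B\otimes I)=\{0\}$ when $\ker B=\{0\}$ (immediate by expanding $w$ against an orthonormal basis of the second factor). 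Your sector-by-sector treatment of the unboundedness of $Y$ is also a careful touch that the paper's proof glosses over.
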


\begin{proof}
We see that for $\psi \in \mathfrak{\hat{h}}\otimes \mathfrak{h}_{\text{osc}}$%
\[
\langle \psi |Y\psi \rangle =\sum_{jl}\langle \psi |\left( I\otimes
a_{j}\right) ^{\ast }\left( A_{jl}\otimes I\right) \left( I\otimes
a_{l}\right) \psi \rangle =\sum_{jl}\langle \psi _{j}|\,A_{jl}\otimes
I\,\psi _{l}\rangle
\]
where $\psi _{j}=\left( I\otimes b_{j}\right) \psi $. We may decompose $\psi
_{j}\equiv \sum_{\mathbf{n}}\psi _{j}\left( \mathbf{n}\right) \otimes |%
\mathbf{n}\rangle $, where $|\mathbf{n}\rangle $ is the orthonormal basis of
number states for the oscillators and $\psi _{j}\left( \mathbf{n}\right) \in 
\mathfrak{\hat{h}}$. Then 
\[
\langle \psi |Y\psi \rangle =\sum_{\mathbf{n}}\sum_{jl}\langle \psi
_{j}\left( \mathbf{n}\right) |\,A_{jl}\,\psi _{l}\left( \mathbf{n}\right)
\rangle
\]
and, for each fixed $\mathbf{n}$, we have $\sum_{jl}\langle \psi _{j}\left( 
\mathbf{n}\right) |\,A_{jl}\,\psi _{l}\left( \mathbf{n}\right) \rangle \leq
0 $ with equality if and only if the $\psi _{j}\left( \mathbf{n}\right) =0$
since $\left[ A_{jl}\right] $ is assumed to be strictly Hurwitz. In
particular, if we assume that $\psi $ is in the kernel of $Y$ then we deduce
that $\psi _{j}\left( \mathbf{n}\right) =0$ for each $\mathbf{n}$ and $%
j=1,\cdots ,m$. It follows that $\psi _{j}=\left( I\otimes b_{j}\right) \psi
=0$ for each $j=1,\cdots ,m$, and this implies that $\psi \equiv \phi
\otimes |0\rangle_{\rm osc} $ for some $\phi \in \mathfrak{\hat{h}}$ as required.
\end{proof}

Note, however, that as we shall see below, for Theorem \ref{thm: adiabatic elimination}
to hold it is enough that ${\rm ker}(Y)=M_0$.

\begin{lemma}
\label{lem:unitary coefficients} The operator $\hat{S}$ is unitary and $\hat{%
K}+\hat{K}^{\ast }+\hat{L}^{\ast }\hat{L}=0$.
\end{lemma}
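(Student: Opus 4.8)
The plan is to verify both assertions by direct algebraic computation, the only inputs being the definitions (\ref{Limit parameters}), the structural identities (\ref{identities}), and the unitarity of $S$. The engine behind every cancellation is the first identity $A+A^{\ast}=-C^{\ast}C$, which lets one collapse expressions quadratic in $C$ back down to linear ones.

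First I would establish that $\hat{S}$ is unitary. Writing $W=I+CA^{-1}C^{\ast}$ so that $\hat{S}=WS$, and using that $S$ is unitary, it suffices to show $W^{\ast}W=WW^{\ast}=I$: then $\hat{S}^{\ast}\hat{S}=S^{\ast}W^{\ast}WS=S^{\ast}S=I$ and $\hat{S}\hat{S}^{\ast}=WSS^{\ast}W^{\ast}=WW^{\ast}=I$. Expanding $W^{\ast}W=(I+C(A^{\ast})^{-1}C^{\ast})(I+CA^{-1}C^{\ast})$ produces the terms $I$, $CA^{-1}C^{\ast}$, $C(A^{\ast})^{-1}C^{\ast}$, and the quartic term $C(A^{\ast})^{-1}C^{\ast}CA^{-1}C^{\ast}$. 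Substituting $C^{\ast}C=-(A+A^{\ast})$ into the quartic term and distributing yields $-C(A^{\ast})^{-1}C^{\ast}-CA^{-1}C^{\ast}$, which exactly cancels the two linear terms and leaves $W^{\ast}W=I$. The computation for $WW^{\ast}$ is identical with the roles of $A$ and $A^{\ast}$ interchanged.

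Next I would verify $\hat{K}+\hat{K}^{\ast}+\hat{L}^{\ast}\hat{L}=0$. I would expand $\hat{L}^{\ast}\hat{L}=(G^{\ast}-Z^{\ast}(A^{\ast})^{-1}C^{\ast})(G-CA^{-1}Z)$ into its four terms and rewrite each using the three identities (\ref{identities}): namely $G^{\ast}G=-(R+R^{\ast})$, $G^{\ast}C=-(X+Z^{\ast})$ together with its adjoint $C^{\ast}G=-(X^{\ast}+Z)$, and once more $C^{\ast}C=-(A+A^{\ast})$ in the term $Z^{\ast}(A^{\ast})^{-1}C^{\ast}CA^{-1}Z$. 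After these substitutions the pure-$Z$ pieces $\pm Z^{\ast}A^{-1}Z$ and $\pm Z^{\ast}(A^{\ast})^{-1}Z$ cancel in pairs, leaving $\hat{L}^{\ast}\hat{L}=-(R+R^{\ast})+XA^{-1}Z+Z^{\ast}(A^{\ast})^{-1}X^{\ast}$. Since $\hat{K}+\hat{K}^{\ast}=(R+R^{\ast})-XA^{-1}Z-Z^{\ast}(A^{\ast})^{-1}X^{\ast}$ directly from $\hat{K}=R-XA^{-1}Z$, adding the two expressions gives zero term-by-term.

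The computation is routine and, since all operators here are bounded, free of convergence or domain subtleties. The only point requiring care is the bookkeeping of the row/column vector structure of $X$, $Z$, $C$, $G$ when taking adjoints, for instance $(XA^{-1}Z)^{\ast}=Z^{\ast}(A^{\ast})^{-1}X^{\ast}$ and $(G^{\ast}C)^{\ast}=C^{\ast}G$. I would emphasize that the apparently miraculous cancellations are in every case simply the collapse of a quadratic or quartic $C$-term via $C^{\ast}C=-(A+A^{\ast})$, exactly the same mechanism as the trick (\ref{trick}) used earlier.
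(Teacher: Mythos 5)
Your proposal is correct and follows essentially the same route as the paper: direct algebraic verification of $\hat S\hat S^{\ast}=\hat S^{\ast}\hat S=I$ and of $\hat K+\hat K^{\ast}+\hat L^{\ast}\hat L=0$, with every cancellation driven by the identities (\ref{identities}), in particular $C^{\ast}C=-(A+A^{\ast})$. The only cosmetic difference is that the paper first proves $\ker(I+CA^{-1}C^{\ast})=0$ before computing $\hat S\hat S^{\ast}$, a step your argument renders unnecessary since computing both products $W^{\ast}W$ and $WW^{\ast}$ directly already establishes unitarity.
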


\begin{proof}
We first show that $I+CA^{-1}C^{\ast }$ is invertible. Suppose that $u\in 
\mathrm{ker}\left( I+CA^{-1}C^{\ast }\right) $%
\begin{eqnarray*}
u &=&-CA^{-1}C^{\ast }u\Rightarrow C^{\ast }u=-C^{\ast }CA^{-1}C^{\ast
}u\Rightarrow \left( I+C^{\ast }CA^{-1}\right) C^{\ast }u=0 \\
&\Rightarrow &\left( A+C^{\ast }C\right) A^{-1}C^{\ast }u=0\Rightarrow
-A^{\ast }A^{-1}C^{\ast }u=0\Rightarrow C^{\ast }u=0
\end{eqnarray*}
so substituting $C^{\ast }u=0$\ into $u=-CA^{-1}C^{\ast }u$ we see that $u=0$%
, therefore $\mathrm{ker}\,\hat S =0$. As $S$ is unitary, we have 
\begin{eqnarray*}
\hat{S}\hat{S}^{\ast } &=&\left( I+C^\ast A^{-1}C\right) \left( I+CA^{\ast
-1}C^{\ast }\right) \\
&=&I+CA^{-1}\left( A+A^{\ast }+C^{\ast }C\right) A^{\ast -1}C^{\ast }=I
\end{eqnarray*}
using the first of identities (\ref{identities}). Similarly $\hat{S}^{\ast }%
\hat{S}=I$.

Likewise we use (\ref{identities}) to show that 
\begin{eqnarray*}
\hat{K}+\hat{K}^{\ast }+\hat{L}^{\ast }\hat{L} &=&R-XA^{-1}Z+R^{\ast
}-Z^{\ast }A^{\ast -1}X^{\ast } \\
&&+\left( G^{\ast }-Z^{\ast }A^{\ast -1}C^{\ast }\right) \left(
G-CA^{-1}Z\right) \\
&=&-\left( X-G^{\ast }C\right) A^{-1}Z-Z^{\ast }A^{\ast -1}\left( X^{\ast
}-C^{\ast }G-C^{\ast }CA^{-1}Z\right) \\
&=&Z^{\ast }A^{-1}Z+Z^{\ast }A^{\ast -1}\left( Z+C^{\ast }CA^{-1}Z\right) \\
&=&Z^{\ast }A^{\ast -1}(A+A^{\ast }+C^{\ast }C)A^{-1}Z \\
&=&0.
\end{eqnarray*}
\end{proof}

Using the above results, we can now proceed to complete the proof  of Theorem \ref{thm: adiabatic elimination}.
Let us first recall the main results from Bouten, van Handel, and Silberfarb \cite{BvHS}. Let $V\left( t,k\right)
=U\left( t,k\right) ^{\ast }$, then $V$ satisfies the left QSDE (using a summation  convention)
\[
dV(t,k)=V\left( t,k\right) \left\{ 
\alpha \left( k\right) \otimes dt+\beta _{l}\left( k\right) \otimes dB_{l}(t)
+\gamma _{j}\otimes dB_{j}(t)^{\ast }+(\varepsilon _{jl}-\delta
_{jl})\otimes d\Lambda _{jl}(t)
\right\} ,
\]
where $\alpha \left( k\right) =k^{2}\alpha _{2}+k\alpha _{1}+\alpha
_{0}=K\left( k\right) ^{\ast }$, $\beta _{j}\left( k\right) =k\beta
_{1,j}+\beta _{0,j}=L_{j}\left( k\right) ^{\ast }$, $\gamma _{j}\left(
k\right) =-S_{lj}^{\ast }L_{l}$ and $\varepsilon _{jl}=S_{lj}^{\ast }$.
Their results are stated for the left QSDE for the reason that it is easier
to formulate the conditions for unbounded coefficients this way, however,
the treatment is of course equivalent.

We note that $\alpha _{2}$ is then $Y^{\ast }$, with kernel space $M_{0}$,
and we denote its Moore-Penrose inverse by $\tilde{\alpha}_{2}$  (note that this Moore-Penrose inverse exists
since $Y^*$ has the same form and properties as $Y$). The pre-limit coefficients satisfy Assumption 1 in the paper of Bouten, van Handel, and Silberfarb\cite{BvHS} by
construction. Assumption 2 of that work correspond to the identities $\alpha
_{2}\tilde{\alpha}_{2}=\tilde{\alpha}_{2}\alpha _{2}=P_{f}$, $\alpha
_{2}P_{s}=0$, $\beta _{1,i}^{\ast }P_{s}=0$ and $P_{s}\alpha _{1}P_{s}=0$:
the last three are automatic since $P_{s}$ projects onto the ground state of
the oscillator and in each case we encounter $a_{j}|0\rangle _{\mathrm{osc}%
}=0$. The limit coefficients in Assumption 3 of \cite{BvHS} are then 
\begin{eqnarray*}
\hat{\alpha} &=&P_{s}\left( \alpha _{0}-\alpha _{1}\tilde{\alpha}_{2}\alpha
_{1}\right) P_{s}=\left( R^{\ast }-Z^{\ast }A^{\ast -1}X^{\ast }\right)
\otimes |0\rangle \langle 0|_{\mathrm{osc}}\equiv \hat{K}^{\ast }\otimes
|0\rangle \langle 0|_{\mathrm{osc}}, \\
\hat{\beta} &=&P_{s}\left( \beta _{0}-\alpha _{1}\tilde{\alpha}_{2}\beta
_{1}\right) P_{s}=\left( G^{\ast }-Z^{\ast }A^{\ast -1}C^{\ast }\right)
\otimes |0\rangle \langle 0|_{\mathrm{osc}}\equiv \hat{L}^{\ast }\otimes
|0\rangle \langle 0|_{\mathrm{osc}}, \\
\hat{\varepsilon} &=&P_{s}\varepsilon \left( I+\beta _{1}^{\ast }\tilde{%
\alpha}_{2}\beta _{1}\right) P_{s}=S^{\ast }\left( I+C^{\ast }A^{\ast
-1}C^{\ast }\right) \otimes |0\rangle \langle 0|_{\mathrm{osc}}\equiv \hat{S}%
^{\ast }\otimes |0\rangle \langle 0|_{\mathrm{osc}}, \\
\hat{\gamma} &\equiv &-\hat{\varepsilon}\hat{\beta}^{\ast }\equiv -\hat{S}%
^{\ast }\hat{L}\otimes |0\rangle \langle 0|_{\mathrm{osc}},
\end{eqnarray*}
with $(\hat{S},\hat{L},\hat{K})$ as given in the statement of
Theorem \ref{thm: adiabatic elimination}. These coefficients\ evidently satisfy the requirements of
Assumption 3, namely to generate a unitary adapted Hudson-Parthasarathy
equation on a common invariant domain in $M_{0}$, as was established in
Lemma \ref{lem:unitary coefficients}.
$\blacksquare$

\subsection{Conditions for the Schur complements in Lemma \ref{lem:scrule} to be well-defined}

\begin{lemma}
\label{new}
If 
\begin{eqnarray}
 \ker \left[\begin{array}{cc}
	M_{B,B} & M_{B,C}\\
	M_{C,B} & M_{C,C}
      \end{array}\right]
 &\subseteq& 
\ker\,[M_{A,B}\ M_{A,C}] \label{eq:condition1}\\
\mathrm{im} \left[\begin{array}{c}
                              M_{B,A}\\
			      M_{C,A}
                             \end{array}\right]
&\subseteq& 
{\rm im}\,\left[\begin{array}{cc}
	M_{B,B} & M_{B,C}\\
	M_{C,B} & M_{C,C}
      \end{array}\right]\label{eq:condition2}\\
\ker M_{C,C} &\subseteq& \ker M_{B,C} \label{eq:condition3}\\
\mathrm{im}\, M_{C,B} &\subseteq& \mathrm{im}\, M_{C,C} \label{eq:condition4} \\
\ker M_{B,B} &\subseteq& \ker M_{C,B} \label{eq:condition5}\\
\mathrm{im}\, M_{B,C} &\subseteq& \mathrm{im}\, M_{B,B}, \label{eq:condition6}
\end{eqnarray}
then the Schur complements 
$(M/M_{C,C})/(M/M_{C,C})_{B,B} $ and $(M/M_{B,B})/((M/M_{B,B}))_{C,C}$ are all well-defined.
\end{lemma}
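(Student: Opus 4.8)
The plan is to verify, for each of the two iterated Schur complements, the two well-definedness criteria recorded in the remark following Definition \ref{def:Schur}: that the image of the relevant off-pivot block lies in the image of the pivot, and that the kernel of the pivot lies in the kernel of the other off-pivot block. First I would note that the six hypotheses are symmetric under interchange of the labels $B$ and $C$: conditions (\ref{eq:condition1}) and (\ref{eq:condition2}) are invariant, since passing from $\left[\begin{smallmatrix}M_{B,B}&M_{B,C}\\M_{C,B}&M_{C,C}\end{smallmatrix}\right]$ to $\left[\begin{smallmatrix}M_{C,C}&M_{C,B}\\M_{B,C}&M_{B,B}\end{smallmatrix}\right]$ merely permutes the $B$- and $C$-coordinates and so leaves kernel and image unchanged, while (\ref{eq:condition3})--(\ref{eq:condition4}) are exchanged with (\ref{eq:condition5})--(\ref{eq:condition6}). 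Hence it suffices to establish that $(M/M_{C,C})/(M/M_{C,C})_{B,B}$ is well-defined using (\ref{eq:condition1})--(\ref{eq:condition4}); the statement for $(M/M_{B,B})/(M/M_{B,B})_{C,C}$ then follows verbatim after swapping $B\leftrightarrow C$.

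The first step is to show that the inner complement $N:=M/M_{C,C}$ is well-defined, i.e.\ that $\mathrm{im}\,M_{C,A\cup B}\subseteq\mathrm{im}\,M_{C,C}$ and $\ker M_{C,C}\subseteq\ker M_{A\cup B,C}$. The $B$-parts are exactly (\ref{eq:condition4}) and (\ref{eq:condition3}). For the $A$-parts I would feed in the joint conditions: given $v$, (\ref{eq:condition2}) lets me write $M_{C,A}v=M_{C,B}w_B+M_{C,C}w_C$ for some $w_B,w_C$, and since (\ref{eq:condition4}) puts $M_{C,B}w_B\in\mathrm{im}\,M_{C,C}$, we get $\mathrm{im}\,M_{C,A}\subseteq\mathrm{im}\,M_{C,C}$. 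Dually, for $v\in\ker M_{C,C}$, condition (\ref{eq:condition3}) gives $v\in\ker M_{B,C}$ as well, so $(0,v)^{T}$ lies in $\ker\left[\begin{smallmatrix}M_{B,B}&M_{B,C}\\M_{C,B}&M_{C,C}\end{smallmatrix}\right]$, whence (\ref{eq:condition1}) forces $M_{A,C}v=0$. This shows $N$ is well-defined, with blocks $N_{R,L}=M_{R,L}-M_{R,C}(M_{C,C})^-M_{C,L}$ independent of the chosen inverse.

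The core of the argument is the outer complementation $N/N_{B,B}$, for which I must check $\mathrm{im}\,N_{B,A}\subseteq\mathrm{im}\,N_{B,B}$ and $\ker N_{B,B}\subseteq\ker N_{A,B}$. For the image inclusion, given $v$ I would again invoke (\ref{eq:condition2}) to write $M_{B,A}v=M_{B,B}w_B+M_{B,C}w_C$ and $M_{C,A}v=M_{C,B}w_B+M_{C,C}w_C$, substitute into $N_{B,A}v$, and collapse the $w_C$-terms using $M_{B,C}(M_{C,C})^-M_{C,C}=M_{B,C}$ (the identity $NM^-M=N$ of Lemma \ref{lm:RL-inverse}, valid because $\ker M_{C,C}\subseteq\ker M_{B,C}$ by (\ref{eq:condition3})), arriving at $N_{B,A}v=N_{B,B}w_B$. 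For the kernel inclusion, given $v\in\ker N_{B,B}$ I would set $u=(M_{C,C})^-M_{C,B}v$ and verify that $(v,-u)^{T}\in\ker\left[\begin{smallmatrix}M_{B,B}&M_{B,C}\\M_{C,B}&M_{C,C}\end{smallmatrix}\right]$: the top entry vanishes since $M_{B,B}v=M_{B,C}u$ by the defining relation of $v$, and the bottom entry vanishes since $M_{C,C}u=M_{C,B}v$, using $M_{C,C}(M_{C,C})^-M_{C,B}=M_{C,B}$ (the identity $MM^-N=N$ of Lemma \ref{lm:RL-inverse}, valid because $\mathrm{im}\,M_{C,B}\subseteq\mathrm{im}\,M_{C,C}$ by (\ref{eq:condition4})). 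Then (\ref{eq:condition1}) yields $M_{A,B}v=M_{A,C}u$, i.e.\ $N_{A,B}v=0$.

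The main obstacle I anticipate is bookkeeping rather than conceptual: all six hypotheses are phrased at the level of the \emph{joint} pivot $\left[\begin{smallmatrix}M_{B,B}&M_{B,C}\\M_{C,B}&M_{C,C}\end{smallmatrix}\right]$ (or of $M_{C,C}$ alone), whereas the criteria I must verify live at the level of the single reduced blocks $N_{B,A}$, $N_{B,B}$, $N_{A,B}$. The delicate point is to transfer kernel and image information between the joint block and its sub-blocks at \emph{both} successive complementations, which hinges on applying precisely the correct one-sided cancellation $MM^-N=N$ or $NM^-M=N$ from Lemma \ref{lm:RL-inverse} to the right pairing of conditions at each stage; invoking the wrong inclusion would silently break independence from the choice of generalized inverse.
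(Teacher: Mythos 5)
Your proposal is correct and follows essentially the same route as the paper's own proof: you verify the well-definedness criteria for the inner complement $M/M_{C,C}$ from conditions (\ref{eq:condition1})--(\ref{eq:condition4}), then for the outer complement you establish the image inclusion via (\ref{eq:condition2}) together with the cancellation $M_{B,C}(M_{C,C})^-M_{C,C}=M_{B,C}$, and the kernel inclusion by exhibiting the vector $\bigl(v,-(M_{C,C})^-M_{C,B}v\bigr)$ in the kernel of the joint block and invoking (\ref{eq:condition1}) --- exactly the paper's two key computations. Your explicit appeal to the $B\leftrightarrow C$ symmetry for the second iterated complement is also how the paper disposes of that case (it does so when Lemma \ref{new} is applied in the proof of Lemma \ref{lem:scrule}), so the only difference is that the paper additionally records well-definedness of the auxiliary sub-block complements (such as $M_{A\cup C,A\cup C}/M_{C,C}$) needed later, which the literal statement does not require.
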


\begin{proof}
Collecting the Schur complements used in the proof of Lemma \ref{lem:scrule} (successive
complementation rule), we see that we have to show that
\begin{eqnarray*}
\ M/M_{C,C},\ M/M_{B,B},\
(M/M_{C,C})/(M/M_{C,C})_{B,B}\\
M_{A \cup C, A \cup C}/M_{C,C},\ M_{A \cup C, B \cup C}/M_{C,C},\ M_{B \cup C, A
\cup C}/M_{C,C}
\end{eqnarray*}
exist. To proceed, first note that, by Lemma \ref{lm:GS-exist}, (\ref{eq:condition1})-(\ref{eq:condition6}) imply that \[ M/M_{B \cup C,B \cup C},\ M_{B \cup C, B \cup C}/ M_{B,B},\ M_{B
\cup C, B
\cup C}/ M_{C,C} \] are well-defined.
From $\ker M_{B,B} \subseteq \ker M_{C,B}$, we see that $M_{B,B}x = 0 \Rightarrow
M_{C,B}x = 0$. This combined with condition (\ref{eq:condition1}) shows
that $M_{B,B}x = 0 \Rightarrow \left[\begin{array}{c} M_{B,B} \\
M_{C,B} \end{array}\right]x = 0 \Rightarrow M_{A,B}x = 0$. Thus 
(\ref{eq:1}), given below, holds. Now,  (\ref{eq:condition2}) implies that $\forall\ x\
\exists\ y,z$ such that 
\begin{equation}
\left[\begin{array}{c} M_{B,A} \\
M_{C,A} \end{array}\right]x = \left[\begin{array}{c} M_{B,B}y + M_{B,C}z\\
M_{C,B}y + M_{C,C}z \end{array}\right]. \label{eq:7}
\end{equation}
But conditions  (\ref{eq:condition4}) and (\ref{eq:condition6}) imply that
$\exists\ w,v$ such that $M_{C,B}y = M_{C,C}v$ and $M_{B,C}z = M_{B,B}w$. This
together with (\ref{eq:7}) shows that $\mathrm{im}\,M_{B,A} \subseteq
\mathrm{im}\, M_{B,B}$ and $\mathrm{im}\, M_{C,A} \subseteq
\mathrm{im}\, M_{C,C}$. Combining this with (\ref{eq:condition6}) gives us
 (\ref{eq:2}), given below. By analogous arguments we can also establish (\ref{eq:3})
and (\ref{eq:4}). 
\begin{eqnarray}
 \ker M_{B,B} &\subseteq& \ker \left[\begin{array}{c}
                              M_{A,B}\\
			      M_{C,B}
                             \end{array}\right]\label{eq:1}\\
\mathrm{im}\,[\ M_{B,A}\ M_{B,C}] &\subseteq& \mathrm{im}\,M_{B,B} \label{eq:2}\\
\ker M_{C,C} &\subseteq& \ker \left[\begin{array}{c}
                              M_{A,C}\\
			      M_{B,C}
                             \end{array}\right] \label{eq:3}\\
\mathrm{im}\,[M_{C,A}\ M_{C,B}] &\subseteq& \mathrm{im}\,M_{C,C} \label{eq:4}
\end{eqnarray}
From (\ref{eq:1}) to (\ref{eq:4}) it follows directly that $M/M_{B,B}$ and $M/M_{C,C}$ exist.
Existence of $M_{A \cup C, A \cup C}/M_{C,C}$,  $M_{A \cup C, B \cup
C}/M_{C,C}$ and $M_{B \cup C, A \cup
C}/M_{C,C}$ follows immediately from  (\ref{eq:3}) and (\ref{eq:4}).

Now we show that $(M/M_{C,C})/(M/M_{C,C})_{B,B}$ exists. We require
\begin{eqnarray*}
 \ker (M_{B,B} - M_{B,C} M_{C,C}^- M_{C,B}) &\subseteq& \ker (M_{A,B} - M_{A,C}
M_{C,C}^-M_{C,B}) \\
 \mathrm{im}(M_{B,A} - M_{B,C} M_{C,C}^- M_{C,A}) &\subseteq&
\mathrm{im}(M_{B,B} - M_{B,C} M_{C,C}^- M_{C,B})
\end{eqnarray*}
Let $v \in \mathrm{im} (M_{B,A} - M_{B,C} M_{C,C}^- M_{C,A})$ be
\begin{eqnarray*}
 v &=& (M_{B,A} - M_{B,C} M_{C,C}^- M_{C,A})x\\
   &=& [\begin{array}{cc} 1 & -M_{B,C} M_{C,C}^-\end{array}] \left[\begin{array}{c}
                              M_{B,A}x\\
			      M_{C,A}x
                             \end{array}\right],
\end{eqnarray*}
for some vector $x$. Using (\ref{eq:condition2}) and noting that $M_{B,C}M_{C,C}^- M_{C,C} =
M_{B,C}$ (due to (\ref{eq:condition3}) and  Lemma \ref{lm:RL-inverse}) leads to
\begin{eqnarray*}
 v &=& [\begin{array}{cc} 1 & -M_{B,C} M_{C,C}^-\end{array}] \left[\begin{array}{c}
	M_{B,B}y + M_{B,C}z\\
	M_{C,B}y + M_{C,C}z
      \end{array}\right]\\
  &=& (M_{B,B} - M_{B,C} M_{C,C}^- M_{C,B})x
\end{eqnarray*}
which shows the required image space inclusion.

To show that $ \ker (M_{B,B} - M_{B,C} M_{C,C}^- M_{C,B}) \subseteq \ker
(M_{A,B} - M_{A,C} M_{C,C}^-M_{C,B})$ holds we choose some $x \in \ker (M_{B,B}
- M_{B,C} M_{C,C}^- M_{C,B})$ and see that
\[
 M_{B,B}x - M_{B,C} M_{C,C}^-M_{C,B}x = [\begin{array}{cc} M_{B,B} & M_{B,C} \end{array}]
\left[\begin{array}{c}
	x\\
	-M_{C,C}^- M_{C,B}x
\end{array}\right] = 0
\]
which implies that $
\left[\begin{array}{c}
	x\\
	-M_{C,C}^- M_{C,B}x
\end{array}\right] \in 
\ker [\begin{array}{cc} M_{B,B} & M_{B,C} \end{array}]$. However, $M_{C,B}x - M_{C,C} M_{C,C}^- M_{C,B}x = 0$
since $M_{C,C} M_{C,C}^- M_{C,B} = M_{C,B}$ (by (\ref{eq:4})  and Lemma \ref{lm:RL-inverse}).
It then follows that
\[
\left[\begin{array}{c}
	x\\
	-M_{C,C}^- M_{C,B}x
\end{array}\right] \in 
\ker \left[\begin{array}{cc}
	M_{B,B} & M_{B,C}\\
	M_{C,B} & M_{C,C}
      \end{array}\right]
\]
which  by (\ref{eq:condition1}) implies $
\left[\begin{array}{c}
	x\\
	-M_{C,C}^- M_{C,B}x
\end{array}\right] \in \ker\, [M_{A,B}\ M_{A,C}]$,
and therefore we deduce $(M_{A,B} - M_{A,C}
M_{C,C}^-M_{C,B})x =0$ which proves the required kernel space inclusion.
\end{proof}

We remark that the conditions (\ref{eq:condition1})-(\ref{eq:condition6}) are not necessary, as is clear from Horn and Zhang \cite{Zhang}, page 42. 

\subsection{Proof of Theorem \ref{th:AE-IF-commute}}
The model may initially be described by the set of coefficients $g$ over the
space 
\[
\mathfrak{H}=(\mathfrak{\hat{h}}\oplus \mathfrak{\hat{h}})\otimes \left( \mathbb{%
C}\oplus \mathfrak{K}_{\rm e}\oplus \mathfrak{K}_{\rm i}\right)
\]
and we decompose this as 
\[
\mathfrak{H}=\mathfrak{H}_{1}\oplus \mathfrak{H}_{2}\oplus \mathfrak{H}%
_{3}\oplus \mathfrak{H}_{4}
\]
where 
\[
\begin{array}{lcc}
\mathfrak{H}_{1}=\mathfrak{\hat{h}}\otimes \left( \mathbb{C}\oplus %
\mathfrak{K}_{\rm e}\right) , & \mathrm{Slow} & \mathrm{External} \\ 
\mathfrak{H}_{2}=\mathfrak{\hat{h}}\otimes \mathfrak{K}_{\rm i}, & \mathrm{Slow}
& \mathrm{Internal} \\ 
\mathfrak{H}_{3}=\mathfrak{\hat{h}}\otimes \left( \mathbb{C}\oplus \mathfrak{K}%
_{\rm e}\right) , & \mathrm{Fast} & \mathrm{External} \\ 
\mathfrak{H}_{4}=\mathfrak{\hat{h}}\otimes \mathfrak{K}_{\rm i}, & \mathrm{Fast} & 
\mathrm{Internal}
\end{array}.
\]
With respect to this decomposition, we decompose $g$ into sub-blocks as 
\begin{equation}
g=\left[ 
\begin{array}{cccc}
g_{11} & g_{12} & g_{13} & g_{14} \\ 
g_{21} & g_{22} & g_{23} & g_{24} \\ 
g_{31} & g_{32} & g_{33} & g_{34} \\ 
g_{41} & g_{42} & g_{43} & g_{44}
\end{array}
\right] \equiv \left[ \begin{array}{cccc} R_1 & M_1 & X_1 & 0\\
G_1 & S_{\rm ii} -I & C_{\rm i}  & 0\\
Z_1& -C^*S_{\rm i} & A & 0\\
0 & 0 & 0 & 0  
 \end{array}\right], \label{eqn: g decomp}
\end{equation}
The set of labels $\mathfrak{I}=\left\{ 1,2,3,4\right\} $ can be split up
into the slow labels $\left\{ 1,2\right\} $ and the fast labels $F=\left\{
3,4\right\} =S^{\prime }$ as well as the external labels $E=\left\{
1,3\right\} $ and the internal labels $I=\left\{ 2,4\right\} =E^{\prime }$.

To proceed, we must first establish that the generalized Schur complement is
well-defined. Here we are ultimately retaining the ``slow external'' degrees
of freedom (index 1) and eliminating the index sets $\left\{ 2,3,4
\right\}$. To this end, We need to check that  conditions (\ref{eq:condition1})-(\ref{eq:condition6}) are all satisfied. We begin with (\ref{eq:condition1}). 

Let $(x,y,z)^T$ be an element of ${\rm ker}\left[ \begin{array}{ccc} S_{\rm ii}-I & C_{\rm i} & 0\\
-C^*S_{\rm i} & A & 0 \\ 0 & 0 & 0 \end{array}\right]$. Then $x,y$ satisfies $(S_{\rm ii}-I)x +C_{\rm i}y=0$ and  $-C^*S_{\rm i}x +Ay=0$, while $z$ is arbitrary. Therefore, we have $y= A^{-1}C^*S_{\rm i}x$ and $(S_{\rm ii}+C_{\rm i}A^{-1}C^*S_{\rm i}-I)x=0$. Since $S_{\rm ii}+C_{\rm i}A^{-1}C^*S_{\rm i}-I$ is invertible by hypothesis, we find that   $x=0$. It then follows that also $y=0$, and we conclude that ${\rm ker}\left[ \begin{array}{ccc} S_{\rm ii}-I  & C_{\rm i} & 0\\
-C^*S_{\rm i} & A & 0 \\ 0 & 0 & 0 \end{array}\right]$ consists of vectors of the form $(0,0,z)^T$. Clearly such vectors lie in the kernel of $[\begin{array}{ccc} M_1 & X_1 & 0 \end{array}]$ and we conclude that ${\rm ker}\left[ \begin{array}{ccc} S_{\rm ii}-I & C_{\rm i} & 0\\
-C^*S_{\rm i} & A & 0 \\ 0 & 0 & 0 \end{array}\right] \subseteq {\rm ker}[\begin{array}{ccc} M_1 & X_1 & 0 \end{array}]$.

Next, we check if for every given vector $x$ there exist vectors $y$ and $z$ such that we have the equality
\begin{equation}
\left[\begin{array}{c} G_1x \\ Z_1x  \end{array} \right]=\left[\begin{array}{cc} S_{\rm ii}-I & C_{\rm i}  \\
-C^*S_{\rm i} & A  \end{array} \right] \left[ \begin{array}{c} y \\  z \end{array}\right]. \label{eq:aux-im-incl}
\end{equation}
In particular, this will be satisfied if the matrix $\left[\begin{array}{cc} S_{\rm ii}-I & C_{\rm i}  \\
-C^*S_{\rm i} & A  \end{array} \right]$ is invertible. However,  since $S_{\rm ii} -I + C_{\rm i}A^{-1}C^*S_{\rm i}$ and $A$ are invertible we see that this simply follows from the Banachiewicz formula. Therefore, for any vector $x$ there indeed exist vectors $y$ and $z$ such that (\ref{eq:aux-im-incl}) holds and we conclude that ${\rm im}\left[ \begin{array}{cc} G_1 \\ Z_1 \\ 0 \end{array} \right] \subseteq {\rm im}\left[\begin{array}{ccc} S_{\rm ii} & C_{\rm i}  & 0 \\
-C^*S_{\rm i} & A  & 0 \\ 0 & 0 & 0 \end{array} \right]$. Moreover, from the fact that $A$ is invertible we also  get 
\begin{eqnarray*}
{\rm ker}\left[ \begin{array}{cc} A & 0 \\ 0 & 0 \end{array}\right] &\subseteq& {\rm ker}\,[\begin{array}{cc} C_{\rm i} & 0 \end{array}],\\
{\rm im}\left[ \begin{array}{c} -C^*S_{\rm i} \\ 0 \end{array}\right] &\subseteq& {\rm im}\left[\begin{array}{cc} A  & 0 \\ 0 & 0 \end{array}\right],
\end{eqnarray*}
while from the invertibility of $S_{\rm ii}-I$ we automatically have 
\begin{eqnarray*}
{\rm ker}(S_{\rm ii}-I) &\subseteq& {\rm ker}\left[\begin{array}{cc}  -C^*S_{\rm i} \\ 0 \end{array}\right],\\
{\rm im}\,[\begin{array}{cc} C_{\rm i} & 0 \end{array}] &\subseteq& {\rm im}(S_{\rm ii}-I).
\end{eqnarray*}
Therefore conditions  (\ref{eq:condition1})-(\ref{eq:condition6}) are satisfied, and the theorem now follows from Lemmata \ref{new} and \ref{lem:scrule}. 
 
\textbf{Acknowledgement}
J. Gough and H. Nurdin acknowledge the support of EPSRC research grant EP/H016708/1
``Quantum Control: Feedback Mediated by Channels in Non-classical States". H. Nurdin also acknowledges the support of the Australian Research Council. The authors thank an anonymous reviewer for bringing their attention to survey paper of Ouellette\cite{DVO81}.

\end{document}